\newcommand{\smallbullet}{$\vcenter{\hbox{\tiny$\bullet$}}$}
\newcommand{\Input}{\ifnumequal{\value{inputlisti}}{1}{\textbf{Input.}\quad}{\phantom{\textbf{Input.}\quad}}}
\newlist{inputlist}{enumerate}{2} 
\setlist[inputlist]{label=\smallbullet, format = \Input, leftmargin = *, align=left, noitemsep, labelindent = *, topsep=1pt}
\newcommand{\Output}{\ifnumequal{\value{outputlisti}}{1}{\textbf{Output.}\quad}{\phantom{\textbf{Output.}\quad}}}
\newlist{outputlist}{enumerate}{2}
\setlist[outputlist]{label=\smallbullet, format = \Output, leftmargin = *, align=left, noitemsep, labelindent = *, topsep=1pt}
\newlist{problemlist}{enumerate}{2}
\setlist[problemlist]{label={}, format = \textbf{Problem.\ }, leftmargin = *, align=left, noitemsep, labelindent = \parindent, topsep=1pt}
\newlist{algosteps}{enumerate}{7}
\setlist[algosteps]{label=\textbf{\arabic*:}, noitemsep, topsep = 1pt, leftmargin=1cm}
\title{\NP-completeness of slope-constrained drawing \\of complete graphs}
\author{C\'{e}dric Pilatte\thanks{Department of Mathematics, 
University of Mons (UMONS), Place du Parc 20, 7000 Mons, Belgium.}
\thanks{Dep.~Math.~and Applications, 
\'{E}cole Normale Sup\'{e}rieure (ENS), Rue d'Ulm 45, 75005 Paris, France.}}
\date{January 2020}
 \newtheoremstyle{underlineNoIndent}
{}        
{}              
{}              
{}    
{}              
{}             
{1.5mm}         
{{\underline{\thmname{#1}\thmnumber{ #2}.}}}
\theoremstyle{plain} 
\newtheorem{theoreme}{Theorem}[section]
\newtheorem{prop}[theoreme]{Proposition}
\newtheorem{lem}[theoreme]{Lemma}
\newtheorem*{theorem*}{Theorem}
\theoremstyle{definition}
\newtheorem{example}[theoreme]{Example}
\newtheorem{defin}[theoreme]{Definition}
\newtheorem{fact}[theoreme]{Fact}
\newtheorem*{question*}{Question}
\newtheorem*{defin*}{Definition}
\newtheorem*{conj*}{Conjecture}
\newtheorem*{lem*}{Lemma}
\theoremstyle{remark}
\newtheorem{remark}[theoreme]{Remark}
\newtheorem*{remark*}{Remark}
\theoremstyle{underlineNoIndent}
\newtheorem*{claim*}{Claim}
\newcommand{\rr}{\mathbb{R}^2}
\DeclareMathOperator{\slopes}{\mathrm{slp}}
\newcommand{\emb}{\hookrightarrow}
\newlang{\SCGD}{SCGD\ }
\g@addto@macro\normalsize{%
  \setlength\abovedisplayskip{6pt}
  \setlength\belowdisplayskip{6pt}
  \setlength\abovedisplayshortskip{4pt}
  \setlength\belowdisplayshortskip{4pt}
}
\begin{document}
 
\maketitle

\vspace{-0.6cm}

\begin{abstract}
We prove the \NP-completeness of the following problem. Given a set $S$ of $n$ slopes and an 
integer $k\geq 1$, is it possible to draw a complete graph on $k$ vertices in the plane using only 
slopes from $S$? Equivalently, does there exist a set $K$ of $k$ points in general position such that 
the slope of every segment between two points of $K$ is in $S$? We then present a polynomial-time 
algorithm for this question when $n\leq 2k-c$, conditional on a conjecture of R.E.~Jamison. For~$n=k$, 
an algorithm in $\mathcal{O}(n^4)$ was proposed by Wade and Chu. For this case, our algorithm is linear 
and does not rely on Jamison's conjecture.

\end{abstract}
\begin{keywords}
Computational Complexity, Discrete Geometry.
\end{keywords}

\section{Introduction}

A \emph{straight-line drawing} of an undirected graph $G$ is a representation of $G$ in the plane using 
distinct points for the vertices of $G$ and line segments for the edges. The segments are allowed to 
intersect, but not to overlap, meaning that no segment may pass through a non-incident vertex. The slope of 
a line $l$ is denoted by $\slopes(l)\in \mathbb{R}\cup \{\infty\}$. If $A$ is a set of points, we write 
$\slopes(A)$ for the set of all slopes determined by $A$, i.e. 
$$\slopes(A) = \{\slopes(A_1A_2)\mid A_1, A_2\in A, \, A_1\neq A_2\}.$$ 

The number of slopes used in a straight-line drawing is the number of distinct slopes of the segments in 
the drawing. In 1994, Wade and Chu~\cite{WadeChu} introduced the \emph{slope number} of a graph 
$G$, which is the smallest number $n$ for which there exists a straight-line drawing of $G$ using $n$ 
slopes. This notion has been the subject of extensive research. It was proven independently by Pach, 
P{\'a}lv{\"o}lgyi~\cite{Pach} and Bar{\'a}t, Matou{\v{s}}ek, Wood~\cite{Barat} that graphs of maximum 
degree five may have arbitrarily large slope number. In the opposite direction, Mukkamala and 
P{\'a}lv{\"o}lgyi~\cite{Mukkamala} showed that graphs of maximum degree three have slope number at 
most four, generalizing results in~\cite{Engelstein, Keszegh, Szegedy}. Whether graphs of maximum 
degree four have bounded slope number is still an open problem. Computing the slope number of a 
graph is difficult in general: it is \NP-complete to determine whether a graph has slope number 
two~\cite{Formann}. See also~\cite{Keszegh2, Garg, Bruckner, Hoffmann} for the study of the 
\emph{planar} slope number and related algorithmic questions.

Let us consider the case of the complete graph $K_k$ on $k$ vertices. Let 
$K\subset \mathbb{R}^2$ be the set of points corresponding to the vertices in a straight-line drawing of 
$K_k$. From the definitions, we know that $K$ is in general position and the set of slopes used in the 
drawing is exactly $\slopes(K)$. As in~\cite{JamisonNoncollinear}, we will use the adjective \emph{simple} 
instead of \emph{in general position}, for brevity. It is easily seen that a simple set of $k$ points determines 
at least $k$ slopes~\cite{JamisonNoncollinear, WadeChu}. On the other hand, a straight-line drawing of 
$K_k$ with $k$ 
slopes may be obtained by considering the vertices of a regular $k$-gon. The slope number of $K_k$ is 
thus exactly $k$. 

The slope number of a graph $G$ provides only partial information about the possible sets of slopes of 
straight-line drawings of $G$. Two questions arise naturally:

\begin{enumerate}[itemsep=2pt, topsep=2pt]
	\item What can be said about the straight-line drawings of a graph $G$ that use only a certain 
	number of slopes?
	
	\item Given a set $S$ of slopes, does there exist a straight-line drawing of $G$ using 
	only slopes from~$S$? 
\end{enumerate}

We focus on the case where $G$ is a complete graph. For this case, both questions can be rephrased 
by replacing \emph{straight-line drawings} by \emph{simple sets of points}. The case of complete graphs 
is already difficult and sheds some light on the general situation. As we explain below, the first question is 
still unanswered, in almost all cases, while the second is already \NP-complete when restricted to complete
graphs (\cref{thm:main}). 

Regarding the first question, we have already seen that regular $k$-gons are examples of simple sets of 
$k$ points that use only $k$ slopes. As affine transformations preserve parallelism, the image of a 
regular $k$-gon under an invertible affine transformation is also a simple set of $k$ points with $k$ 
slopes (a set obtained this way is called an \emph{affinely-regular $k$-gon}).\footnote{By 
\emph{regular $n$-gon} or \emph{affinely-regular $n$-gon}, we actually mean the set of vertices of the 
corresponding polygon.} Jamison~\cite{JamisonNoncollinear} proved that these are the only 
possibilities, thereby classifying all straight-line drawings of $K_k$ with exactly $k$ slopes. In the same 
paper, he conjectured a much more general statement. 

\begin{restatable}[Jamison]{conj*}{conjecture} For some constant $c_1$, the following holds. If 
$n\leq 2k-c_1$, every simple set of $k$ points forming (exactly) $n$ slopes is contained in an 
affinely-regular $n$-gon.
\end{restatable}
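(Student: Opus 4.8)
Since the final statement is an open conjecture, I can only outline an attack and flag where I expect it to break down. The plan is to combine Jamison's exact classification of the $n=k$ case with a stability argument, encoding the slope structure combinatorially through the theory of allowable sequences (a directed line rotating through the configuration, in the sense of Goodman--Pollack).

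First I would try to reduce to convex position, the extremal affinely-regular $n$-gon being convex, handling any interior points separately. Labelling the points $v_0,\dots,v_{k-1}$ in cyclic order, the target structure is an injection $f\colon\{0,\dots,k-1\}\emb \mathbb{Z}/n\mathbb{Z}$ for which the slope of $v_av_b$ depends only on $f(a)+f(b)\bmod n$: this is exactly the slope pattern of a sub-configuration of the $n$-th roots of unity, hence of an affinely-regular $n$-gon. So the whole problem is to extract this rigid additive pattern from the bare hypothesis $|\slopes(K)|=n$.

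The combinatorial core proceeds as follows. Rotating a directed line through an angle $\pi$ reverses the order of the projections of $K$, and each of the $n$ slopes corresponds to one \emph{batch} of simultaneous adjacent transpositions, namely the parallel segments perpendicular to the line at that instant. Over the half-turn every one of the $\binom{k}{2}$ pairs is transposed exactly once, so the $n$ batches partition $\binom{k}{2}$ transpositions; with $n\le 2k-c_1$ the average batch has size $\approx k/4$, forcing long parallel classes. I would try to show that this partition must be of \emph{interval/nested} type, which is precisely the allowable sequence realized by a subset of a regular polygon, and then invoke a rigidity step: parallelism is affine-invariant, so each parallel class imposes parallelism conditions (vanishing of the corresponding $2\times2$ determinants) on the coordinates of $K$, and the fact that the full pattern coincides with that of the roots of unity should pin the points down to an affine image of an affinely-regular $n$-gon. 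Jamison's treatment of the $n=k$ case gives the template for this final rigidity argument.

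The main obstacle is the stability step. Passing from the exact extremal case, where the batch structure is completely rigid, to the whole range $n\le 2k-c_1$ requires showing that a small surplus of slopes still forces the near-regular combinatorial pattern, and then that this pattern is realizable by a simple point set \emph{only} when the configuration is genuinely (not merely approximately) affinely regular. Affine regularity is an algebraically thin condition, so the argument must convert soft combinatorial/approximate information into an exact algebraic conclusion; equivalently, one needs a Freiman-type inverse theorem on $\mathbb{R}/\pi\mathbb{Z}$ asserting that a slope set of size $\le 2k-c_1$ forces the exact sumset structure $\sigma(a,b)=f(a)+f(b)$. Pinning down the optimal constant $c_1$ and ruling out sporadic near-extremal exceptions (cf.\ the sporadic slope-critical configurations known for small $k$) is exactly where I expect the difficulty to concentrate, and is presumably why the statement is still only a conjecture.
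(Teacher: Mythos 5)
The statement you were asked to prove is Jamison's conjecture, which this paper does \emph{not} prove: it is stated as an open conjecture, and the paper's own results in \cref{sec:algo} (\cref{prop:exact} and \cref{prop:probabilistic}) are explicitly conditional on it. The paper records only that the case $n=k$ is Jamison's theorem, that $n=k+1$ was settled recently by Pilatte, and that the case $n=k+2$ and beyond is open. So there is no proof in the paper to compare yours against, and you were right to recognize that a complete proof cannot reasonably be produced here.

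Accordingly, your proposal is a strategy sketch rather than a proof, and the gap you flag yourself is exactly the open part. The reduction you describe --- encoding the configuration via a rotating directed line (allowable sequences), viewing each slope as a batch of simultaneous transpositions, and aiming to show the slope of $v_av_b$ depends only on $f(a)+f(b) \bmod n$ for an injection $f$ into $\mathbb{Z}/n\mathbb{Z}$ --- is in the spirit of the arguments known to work for $n=k$ and $n=k+1$, so the framing is sensible. But the stability step, passing from the rigid extremal case to the full range $n\leq 2k-c_1$ (i.e.\ showing that a surplus of up to roughly $k$ extra slopes still forces the exact regular-polygon batch structure, with no sporadic exceptions), is precisely what is unknown; no amount of counting of average batch sizes is currently known to bridge it, and an inverse/Freiman-type statement of the kind you invoke has not been established in this setting. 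In short: there is no gap \emph{relative to the paper}, since the paper proves nothing here, but your write-up is an outline of an attack on an open problem, not a proof, and it correctly labels itself as such.
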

\noindent
The case $n=k$ corresponds to Jamison's result, and the case $n=k+1$ has been proven 
recently~\cite{Pilatte}. The conjecture is still open for $n = k+2$ and beyond.

\medbreak 
The aim of this paper is to investigate the second question: the algorithmic problem of deciding whether 
a complete graph admits a straight-line drawing that uses only slopes from a given set.

\newcounter{saveCounterTheoreme}
\setcounter{saveCounterTheoreme}{\value{theoreme}}
\newcounter{saveCounterSection}
\setcounter{saveCounterSection}{\value{section}}
\begin{defin}\label{def:scgd}
The \emph{slope-constrained complete graph drawing problem} (\SCGD for short), is the 
following decision problem.
\begin{inputlist} 
	\item A set $S$ of $n$ slopes;
	\item A natural number $k$.
\end{inputlist}

\begin{outputlist}
	\item  \texttt{YES} if there exists a simple set $K$ of $k$ points in the plane such that 
	$\slopes(K) \subseteq S$; 

	\item  \texttt{NO} otherwise.
\end{outputlist}
\end{defin}
 
As a simple set of $k$ points determines at least $k$ slopes, the problem is only interesting when 
$n\geq k$. Wade and Chu~\cite{WadeChu} gave an algorithm with time complexity $\mathcal{O}(n^4)$ 
for the restricted version of the \SCGD problem, where the number of points is equal to the number of 
slopes, i.e.~$k=n$. They asked how to solve the problem when the set of slopes contains more than $k$ 
slopes. 

In this article, we also consider variants of the \SCGD problem where the input is required to satisfy an 
inequality of the type ``$n\leq f(k)$'', for several functions $f$. As we will see, the complexity of the 
restricted \SCGD problem is highly dependent on the choice of $f$, if $\P\neq \NP$. Our results are the 
following (also summarized in \cref{fig:complex}):

\begin{itemize}[itemsep=2pt, topsep=2pt]
	\item \textbf{\cref{sec:npcomp}:} The \SCGD problem is \NP-complete (\cref{thm:main}) 
	in the case where there is no restriction on $n$. A careful examination of the 
	proof shows that the \SCGD problem remains \NP-complete when restricted to 
	$n\leq c k^2$ (for some $c > 0$, which may be chosen to be $2$). The key 
	ingredient is a notion of \emph{slope-generic sets} (\cref{def:slopegeneric}).

	\item \textbf{\cref{sec:algo}:} The \SCGD problem becomes polynomial when the number of slopes 
	is not too large compared to the number of points. More precisely, assuming Jamison's conjecture, 
	there is an $\mathcal{O}(n(n-k+1)^4)$ time algorithm for the \SCGD problem restricted to $n\leq 2k-c_1$, 
	where $c_1$ is the constant appearing in the conjecture. We also give a randomized variant of the 
	algorithm which runs in $\mathcal{O}(n)$ time and gives the correct output with high probability 
	(one-sided Monte-Carlo algorithm). 
\end{itemize}

As mentioned earlier, Jamison's conjecture has been proven for $n=k$ and $n=k+1$. Consequently, our 
algorithm is correct unconditionally when restricted to $n\leq k+1$. Moreover, in this case, it is linear, 
which is easily seen to be optimal. In particular, it improves the $\mathcal{O}(n^4)$ algorithm of Wade 
and Chu~\cite{WadeChu} (which applies to the case $n=k$ only).

\begin{figure}[H]
\centering
\begin{tikzpicture}
\node (0) at (0,0) {};
\node (1) at (1.2,0) {};
\node (2) at (2.4,0) {};
\node (25) at (3.7, 0) {};
\node (3) at (5,0) {};
\node (35) at (6, 0) {};
\node (4) at (7, 0) {};
\node (45) at (8.25,0) {};
\node (10) at (9.5,0) {};

\node (11) at (-1,-0.05) {};

\draw[->] (0.center) -- (10.center);
\fill (0) circle (2pt);
\fill (1) circle (2pt);
\fill (2) circle (2pt);
\fill (3) circle (2pt);
\fill (4) circle (2pt);

\node[above] at (0) {$k$};
\node[above] at (1) {$k+1$};
\node[above] at (2) {$k+2$};
\node[above] at (3) {$2k-c_1$};
\node[above] at (4) {$\,\,c k^2$};

\node[below=2pt] at (0) {$\mathcal{O}(n)$};
\node[below=2pt] at (1) {$\mathcal{O}(n)$};
\node[below=2pt] at (25) {$\mathcal{O}(n(n-k+1)^4)^*$};
\node[below=2pt] at (35) {?};
\node[below=2pt] at (45) {\NP-complete};

\node[above] at (25) {$\ldots$};
\node[above] at (35) {$\,\ldots$};
\node[above] at (45) {$\ldots$};

\node[above] at (10) {$f(k)$};

\node[below] at (8, -0.6) {*assuming Jamison's conjecture.};
\end{tikzpicture}
\caption{Complexity of the \SCGD problem restricted to $n\leq f(k)$ for different functions $f$, where 
$n$ is the number of slopes and $k$ the number of points.}
\label{fig:complex}
\end{figure}
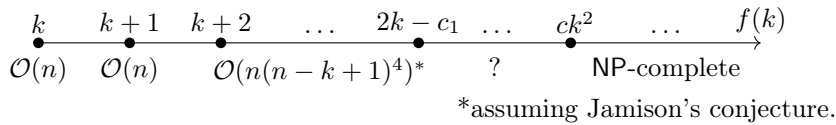

The question of the complexity of the \SCGD problem in the intermediate case, denoted by a question mark in figure 
\cref{fig:complex}, is still unanswered.

\section{Notations and terminology}
Recall that a set of points $A$ is \emph{simple} if no three points of $A$ are collinear. We will say that 
$A$ \emph{has distinct slopes} if it is simple and $\slopes(A_1A_2) \neq \slopes(A_3A_4)$ for every four 
distinct points $A_i$ of $A$. 

We use the term \emph{list} for an ordered sequence. By abuse of notation, if $A=(A_i)_{i\in I}$ is a list, 
we continue to write $A$ for the underlying set $\{A_i\mid i\in I\}$. The notions defined for sets of points 
thus apply to lists of points by ignoring the order structure.

A set of slopes is naturally endowed with a cyclic order, induced from the one on 
$\mathbb{R}\cup \{\infty\}$ (see~\cite[§7.2]{Heyting}). If $S$ is a set of slopes and 
$T=(s_1, \ldots, s_n)$ is a list of slopes, we say the slopes of $T$ are \emph{consecutive slopes} of $S$ 
if $T$ is an interval of the cyclically ordered set $S$. This means that $s_1, s_n$ are the two endpoints 
of the interval and that $s_1, \ldots, s_n$ are all the intermediate slopes, in the correct cyclic order.

Let $\mathrm{Aff}(2, \mathbb{R})$ be the group of invertible affine transformations of the plane. A 
slope can be identified with a point at infinity. There is a natural action of $\mathrm{Aff}(2, \mathbb{R})$ 
on the line at infinity. Therefore, if $\phi \in \mathrm{Aff}(2, \mathbb{R})$, it makes sense to write 
$\phi(s)$ when $s$ is a slope. Let $\mathbb{H}$ denote the subgroup of $\mathrm{Aff}(2, \mathbb{R})$ 
of translations and homotheties. These are precisely the affine transformations that map every line $l$ to 
a line parallel to $l$. In other words, $\mathbb{H}$ is the pointwise stabilizer of the line at infinity (see for 
instance~\cite[Chapter~5]{Meserve}).

\begin{defin}\label{def:emb}
If $A, B$ are two point sets in the plane, we write $A \emb B$ if there is a $\phi \in \mathbb{H}$ with $\phi(A)
\subseteq B$. If $\phi(A) = B$, we write $A \sim B$ and say that $A$ and $B$ are \emph{homothetic}. 
\end{defin}

\begin{remark}
	Note that $A\sim B$ implies $\slopes(A) = \slopes(B)$. The converse is false (see 
	e.g.~\cref{rem:dual}).
\end{remark}

\begin{remark}
	If two simple lists of $n\geq 2$ distinct points $E = (E_1, \ldots, E_n)$ and $F = (F_1, \ldots, F_n)$ 
	satisfy $\slopes(E_iE_j) = \slopes(F_iF_j)$ for all $i\neq j$, there is a unique $\phi\in \mathbb{H}$ 
	such that $\phi(E_i) = F_i$ for all $i$. In particular, $E\sim F$. Note that this assumption is 
	stronger than just $\slopes(E) = \slopes(F)$.
	\label{rem:sim}
\end{remark}

\section{\NP-completeness of the \SCGD problem}

\label{sec:npcomp}

In \cref{sec:motivation}, we informally present some ideas leading to the introduction of the 
notion of \emph{slope-generic sets}. The 
actual definitions are given in \cref{sec:defslopegeneric}, where we show the first properties of 
slope-generic sets. In \cref{sec:proofnpcompl}, we prove that the \SCGD problem is \NP-complete, 
assuming that slope-generic sets can be constructed in polynomial time. This last assertion is proved in 
\cref{sec:prooflemma}.

\subsection{Motivation}

\label{sec:motivation}

A sufficiently general set of $n$ points in the plane determines ${n \choose 2}$ distinct slopes.
Suppose now that we wish to go the other way, and define a 
point set by specifying its set of slopes. First of all, we can only hope to define a point 
set modulo $\mathbb{H}$, since $\slopes(A) = \slopes(B)$ whenever $A$ and $B$ are homothetic.
There is a more fundamental problem: the slopes are not 
independent.\footnote{Three real parameters are needed to specify an element of $\mathbb{H}$. 
Therefore, $2n-3$ real parameters are needed to identify a set of $n$ points modulo $\mathbb{H}$. 
For $n\geq 4$, this is less than the number of slopes, ${n \choose 2}$ (it is equal for 
$n\in \{2, 3\}$).} As soon as we consider four 
distinct points, the six slopes they determine are related by a certain polynomial equation 
(cf. \cref{lem:polynomial}). When we consider a point set $A$ of size $n$, we get a polynomial 
equation for every subset of four points of $A$. Hence, most choices of ${n \choose 2}$ slopes do not 
determine a set of $n$ points modulo $\mathbb{H}$. 

We would like to capture the notion of a point set $A$ having a ``generic'' (i.e.~``ordinary'') set of 
slopes $\slopes(A)$. For many purposes, a set of points can be considered ``generic'' when 
it is simple (i.e.~in general position). In the context of slopes, this is not a sufficiently restrictive 
condition. For example, a regular $n$-gon cannot be considered ``generic'', since it has the very 
special property of forming only $n$ different slopes. 
Having distinct slope is the first (but not the only) condition that we will impose on a point 
set to have a ``generic'' set of slopes.

Our notion should be designed in such a way that any 
``sufficiently random'' point set $A$ should have a ``generic'' set of slopes $\slopes(A)$.
However ``random'' $A$ may be, the elements of $\slopes(A)$ will always satisfy a system 
of polynomial equations given by \cref{lem:polynomial}.
The intuitive idea is that the slopes of a ``generic'' point set $A$ should not satisfy 
more equations of the same form.

%


There is another way to decide whether the slopes of a point set $A$ should be considered ``generic''.
The basic idea is as follows: if $A$ is ``generic'', the knowledge of $\slopes(A)$ should be enough to recover the
original set $A$, modulo $\mathbb{H}$. In other words, for any point set $K$ with slope set $\slopes(K) = \slopes(A)$, 
we should have $K \sim A$. As such, this statement is not entirely correct (we give a refined
heuristic below). Moreover, to ensure that $\slopes(A)$ is really ``generic'', we will need to impose
a similar condition \emph{for many subsets $T$ of $\slopes(A)$}, instead of for $T = \slopes(A)$ only. 
A more elaborate version of the previous idea could be as follows. 

\begin{defin}\label{def:ordinary}
A point set $A$ with distinct slopes is \emph{$k$-ordinary} if every subset 
${T\subseteq \slopes(A)}$ of ${k \choose 2}$~slopes has the following property:
whenever $K$ is a set of $k$ points with $\slopes(K) = T$, 
we have $K \emb A$.
\end{defin}



We cannot hope to say anything interesting about $A$ by considering the subsets of three slopes 
of $\slopes(A)$.\footnote{Given any set $T$ of three distinct slopes, 
there exists a triangle $K$ with $\slopes(K) = T$, and the triangle is unique modulo $\mathbb{H}$. This is 
true whether $T \subseteq \slopes(A)$ or not. Thus, 
each of the ${\lvert\slopes(A)\rvert \choose 3}$ choices of three slopes of $A$ gives a different set $K$, but only 
${|A| \choose 3}$ of those~$K$ will have $K \emb A$. Therefore, no set of more than three 
points is $3$-ordinary.} Hence, from now on, we will only be interested in the case $k \geq 4$. The intuition 
one should keep in mind is that any ``sufficiently random'' set $A$ is $k$-ordinary for all $k > 4$. 
For $k=4$, where the situation is more subtle, \cref{def:ordinary} has to 
be modified to correspond to the behaviour of ``sufficiently random'' point sets. 

%
%
%
%
Let us explain the difference between the cases $k>4$ and $k=4$.
Let $k \geq 4$, let $A$ be a ``sufficiently random'' point set and let $T\subseteq \slopes(A)$ 
be a set of ${k\choose 2}$ slopes. There are two possibilities to consider.
\begin{enumerate}[noitemsep, topsep = 2pt]
	\item The first possibility is that one cannot find a set $K$ of $k$ points with $\slopes(K) = T$.
	For this $T$, the property in \cref{def:ordinary} is vacuously true. This situation is represented in 
	\cref{fig:motivationCase2}. The slopes $s\in T$ must originate from at least $k+1$ points of $A$ 
	(otherwise, they would form the slope set of a subset of $k$ points of $A$, as in \cref{fig:motivationCase1}). 
	For this reason, these slopes are ``independent'', or ``unrelated'', in some appropriate sense.
	
	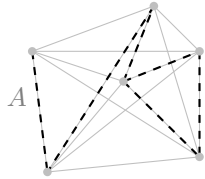
\begin{figure}[H]
	\centering
	\begin{tikzpicture}[scale = 0.8]
		\node[gray] (-1) at (-2, 0.5) {$A$};
		
		\node (0) at (-1.5, -0.75) {};
		\node (1) at (1, -0.5) {};
		\node (2) at (-0.25, 0.75) {};
		\node (3) at (1, 1.25) {};
		\node (4) at (-1.75, 1.25) {};
		\node (5) at (0.25, 2) {};
		
		\draw[gray!50] (0.center) -- (1.center);
		\draw[gray!50] (0.center) -- (2.center);
		\draw[gray!50] (0.center) -- (3.center);
		\draw[gray!50] (0.center) -- (4.center);
		\draw[gray!50] (0.center) -- (5.center);
		\draw[gray!50] (5.center) -- (1.center);
		\draw[gray!50] (5.center) -- (2.center);
		\draw[gray!50] (5.center) -- (3.center);
		\draw[gray!50] (5.center) -- (4.center);
		\draw[gray!50] (2.center) to (3.center);
		\draw[gray!50] (2.center) to (1.center);
		\draw[gray!50] (4.center) to (3.center);
		\draw[gray!50] (3.center) to (1.center);
		\draw[gray!50] (2.center) to (4.center);
		\draw[gray!50] (4.center) to (1.center);
		
		\draw[dashed, thick] (2.center) to (3.center);
		\draw[dashed, thick] (2.center) to (1.center);
		\draw[dashed, thick] (4.center) to (0.center);
		\draw[dashed, thick] (3.center) to (1.center);
		\draw[dashed, thick] (5.center) to (0.center);
		\draw[dashed, thick] (2.center) to (5.center);
		
		\fill [gray!50] (0) circle (2pt);
		\fill [gray!50] (1) circle (2pt);
		\fill [gray!50] (2) circle (2pt);
		\fill [gray!50] (3) circle (2pt);
		\fill [gray!50] (4) circle (2pt);
		\fill [gray!50] (5) circle (2pt);
	\end{tikzpicture}
	\caption{No set of $k$ points in the plane has slope set $T$ ($T$ is in dashed).}
	\label{fig:motivationCase2}
	\end{figure}
	
	\item The other alternative is when we can write $T$ as $T = \slopes(K)$ for some set $K$ of $k$ points.
	Therefore, the slopes in $T$ are ``related''. 
	If $A$ is ``generic'', we expect to have $T = \slopes(B)$ for some subset $B$ of $k$ points of $A$
	(as in the left part of \cref{fig:motivationCase1}, 
	in order to avoid the situation of \cref{fig:motivationCase2}). 
	
	\begin{itemize}[noitemsep, topsep=2pt]
		\item Assume that $k>4$. As $B$ is ``sufficiently random'', we expect $B$ to be the only point
		set modulo $\mathbb{H}$ with slope set $T$. Thus, generically, we have $K \sim B$, so $K \emb A$.
		
		\item Suppose that $k = 4$. Perhaps surprisingly, it turns out that we do not necessarily have $K \sim B$.
		Nevertheless, we can always construct from $K$ a point set $K^*$ (see \cref{def:dual}), 
		for which $\slopes(K^*) = \slopes(K)$ and yet $K^* \not\sim K$ in general. In \cref{fig:motivationCase1}, 
		we have $K \not\sim B$, but $K^* \sim B$.
		This will be true more generally: if $T$ is the set of slopes of a ``generic'' point set $K$,
		we predict every set of four points with slope set $T$ to be homothetic to either $K$ or $K^*$.
		Therefore, we should have either $K \emb A$ (if $K \sim B$) or $K^* \emb A$ (if $K^* \sim B$).
	\end{itemize}
	\begin{figure}[H]
	\centering
	\begin{tikzpicture}[scale = 0.8]
		\node[gray] (-1) at (-2, 0.5) {$A$};
		\node at (0.46, 0.5) {$B$};
		
		\node (0) at (-1.5, -0.75) {};
		\node (1) at (1, -0.5) {};
		\node (2) at (-0.25, 0.75) {};
		\node (3) at (1, 1.25) {};
		\node (4) at (-1.75, 1.25) {};
		\node (5) at (0.25, 2) {};
		
		\draw[gray!50] (0.center) -- (1.center);
		\draw[gray!50] (0.center) -- (2.center);
		\draw[gray!50] (0.center) -- (3.center);
		\draw[gray!50] (0.center) -- (4.center);
		\draw[gray!50] (0.center) -- (5.center);
		\draw[gray!50] (5.center) -- (1.center);
		\draw[gray!50] (5.center) -- (2.center);
		\draw[gray!50] (5.center) -- (3.center);
		\draw[gray!50] (5.center) -- (4.center);
		\draw[gray!50] (2.center) to (3.center);
		\draw[gray!50] (2.center) to (1.center);
		\draw[gray!50] (4.center) to (3.center);
		\draw[gray!50] (3.center) to (1.center);
		\draw[gray!50] (2.center) to (4.center);
		\draw[gray!50] (4.center) to (1.center);
		
		\draw[dashed, thick] (2.center) to (3.center);
		\draw[dashed, thick] (2.center) to (1.center);
		\draw[dashed, thick] (4.center) to (3.center);
		\draw[dashed, thick] (3.center) to (1.center);
		\draw[dashed, thick] (2.center) to (4.center);
		\draw[dashed, thick] (4.center) to (1.center);
		
		\fill [gray!50] (0) circle (2pt);
		\fill [gray!50] (1) circle (2pt);
		\fill [gray!50] (2) circle (2pt);
		\fill [gray!50] (3) circle (2pt);
		\fill [gray!50] (4) circle (2pt);
		\fill [gray!50] (5) circle (2pt);
		
		\begin{scope}[scale = 1.2, xshift = 2cm, yshift  = -0.2cm]
			\node (a0) at (-0.03, 0) {$K$};
			
			\node (a1) at (47/22, -1/22) {};
			\node (a2) at (1, -1/22) {};
			\node (a3) at (-1/4, 3/4) {};
			\node (a4) at (1, -1/2) {};
			
			\draw[dashed, thick] (a1.center) -- (a2.center) -- (a3.center) -- (a4.center) -- cycle;
			\draw[dashed, thick] (a1.center) -- (a3.center);
			\draw[dashed, thick] (a2.center) -- (a4.center);
			
			\fill (a1) circle (1.5pt);
			\fill (a2) circle (1.5pt);
			\fill (a3) circle (1.5pt);
			\fill (a4) circle (1.5pt);
		\end{scope}
		
		\begin{scope}[scale = 0.9, xshift = 4cm, yshift  = 0.8cm]
			\node (a0) at (1.4, 0.5) {$K^*$};
			
			\node (b1) at (1, -0.5) {};
			\node (b2) at (-0.25, 0.75) {};
			\node (b3) at (1, 1.25) {};
			\node (b4) at (-1.75, 1.25) {};
			
			\draw[dashed, thick] (b2.center) to (b3.center);
			\draw[dashed, thick] (b2.center) to (b1.center);
			\draw[dashed, thick] (b4.center) to (b3.center);
			\draw[dashed, thick] (b3.center) to (b1.center);
			\draw[dashed, thick] (b2.center) to (b4.center);
			\draw[dashed, thick] (b4.center) to (b1.center);
			
			\fill (b1) circle (2pt);
			\fill (b2) circle (2pt);
			\fill (b3) circle (2pt);
			\fill (b4) circle (2pt);
		\end{scope}
	\end{tikzpicture}
	\caption{$T$ (in dashed) is determined by a subset $B$ of four points of $A$.}
	\label{fig:motivationCase1}
	\end{figure}
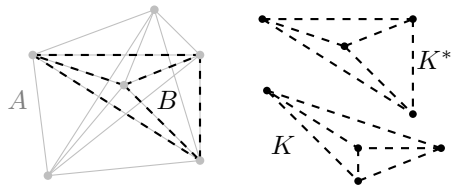
\end{enumerate}

The intuitive reasoning in the case $k=4$ leads to the definition of \emph{slope-generic sets} 
(\cref{def:slopegeneric}). We will see in \cref{lem:embedding} that slope-generic sets 
are automatically $k$-ordinary for all $k \geq 5$.
Slope-generic sets will demonstrate their usefulness in the proof of our main theorem (\cref{thm:main}).
Over the course of \cref{sec:prooflemma}, we will relate this 
purely geometric point of view with 
the previous algebraic considerations (cf. the proof of \cref{lem:example} and \cref{rem:random}).

\subsection{Slope-generic sets}

\label{sec:defslopegeneric}

We start by defining the dual of a list of four points, which is another list of four points that determines 
the exact same slopes, while not being related to the first list by an affine transformation. We denote by 
$\mathfrak{S}_n$ the $n$th symmetric group and by $l(X; PQ)$ the parallel to $PQ$ through $X$.

\begin{defin}
\label{def:dual}
Let $E = (E_1, E_2, E_3, E_4)$ be a simple list of four points. We define a new list $F$ of four points by setting 
$F_1 \coloneqq E_2E_4 \cap l(E_1; E_2E_3)$, $F_2 \coloneqq E_1E_3 \cap l(E_2; E_1E_4)$, 
$F_3 \coloneqq E_2$ and $F_4 \coloneqq E_1$ (see \cref{fig:dual}). The intersections exist as $E$ is 
simple. We call $F$ the \emph{dual} of $E$ and write $F = E^*$.
\end{defin}

\begin{figure}[H]
\centering
\begin{tikzpicture}[scale = 0.33, rotate=0, y=1cm]
\node (1) at (-1.3, -0.28) {};
\fill (1) circle (4pt);
\node (2) at (6.06787, -2.09244) {};
\fill (2) circle (4pt);
\node (3) at (6.68821, 3.98695) {};
\fill (3) circle (4pt);
\node (4) at (4.00005, 0.05809) {};
\fill (4) circle (4pt);

\node (5) at (-0.76032, 5.00887) {};
\fill (5) circle (4pt);
\node (6) at (-6.15248, -2.87198) {};
\fill (6) circle (4pt);


\node [right = 4pt, below = 2pt] at (1) {$E_1,F_4$};
\node [below] at (2) {$E_2,F_3$};
\node [above] at (3) {$E_3$};
\node [below] at (4) {$E_4$};

\node [left] at (5) {$F_1$};
\node [below] at (6) {$F_2$};

\draw (1.center) -- (2.center);
\draw (3.center) -- (6.center);
\draw (5.center) -- (2.center);

\draw [dashed] (2.center) -- (3.center);
\draw [dashed] (1.center) -- (5.center);
\draw [dash dot dot] (1.center) -- (4.center);
\draw [dash dot dot] (2.center) -- (6.center);

\end{tikzpicture}
\caption{Construction of the dual.}
\label{fig:dual}
\end{figure}
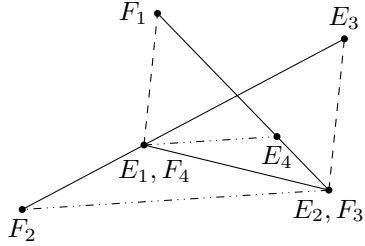

\begin{lem}
Let $E$ be a simple list of four points and let $F = E^*$. Then $\slopes(F) = \slopes(E)$. More precisely, 
$F_{\sigma(1)}F_{\sigma(2)} \parallel E_{\sigma(3)}E_{\sigma(4)}$ for every permutation 
$\sigma \in \mathfrak{S}_4$.
\label{lem:dual}
\end{lem}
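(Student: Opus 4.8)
My plan is to reduce the equality $\slopes(F)=\slopes(E)$ to the stronger parallelism statement, and to observe that the latter consists of exactly six relations, one for each of the three ways of splitting $\{1,2,3,4\}$ into two pairs (each split used twice, once with each block indexing $F$). A permutation $\sigma$ only records the unordered partition $\{\sigma(1),\sigma(2)\}\sqcup\{\sigma(3),\sigma(4)\}$ together with a choice of which block indexes $F$, so the assertion ``$F_{\sigma(1)}F_{\sigma(2)}\parallel E_{\sigma(3)}E_{\sigma(4)}$ for all $\sigma\in\mathfrak{S}_4$'' is equivalent to
\begin{align*}
F_1F_2 &\parallel E_3E_4, & F_3F_4 &\parallel E_1E_2, \\
F_1F_3 &\parallel E_2E_4, & F_2F_4 &\parallel E_1E_3, \\
F_1F_4 &\parallel E_2E_3, & F_2F_3 &\parallel E_1E_4.
\end{align*}
Once these hold, each of the six slopes determined by $F$ coincides with one determined by $E$, the pairing being a bijection across the three partitions, whence $\slopes(F)=\slopes(E)$.

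Next I would dispatch five of the six relations directly from \cref{def:dual}, using only the defining incidences. Indeed $F_3F_4=E_2E_1$ gives $F_3F_4\parallel E_1E_2$ for free; the defining condition $F_1\in l(E_1;E_2E_3)$ together with $F_4=E_1$ yields $F_1F_4\parallel E_2E_3$, and symmetrically $F_2\in l(E_2;E_1E_4)$ with $F_3=E_2$ yields $F_2F_3\parallel E_1E_4$; finally, since $F_1\in E_2E_4$ and $F_3=E_2$ lie on the common line $E_2E_4$ we get $F_1F_3\parallel E_2E_4$, and likewise $F_2\in E_1E_3$ with $F_4=E_1$ gives $F_2F_4\parallel E_1E_3$. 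Simplicity of $E$ is exactly what guarantees that the relevant intersections exist and that the two points in each such pair are distinct (e.g.~$F_1=E_2$ would force $E_1,E_2,E_3$ collinear), so every line invoked is well defined.

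This leaves the single relation $F_1F_2\parallel E_3E_4$, the only one mixing the two independent pieces of the construction; it is the heart of the lemma and the main obstacle. I would prove it by a short affine computation, justified by the fact that parallelism is affine-invariant and the dual construction is defined purely by parallels and intersections, hence commutes with every $\phi\in\mathrm{Aff}(2,\mathbb{R})$. I may thus normalize $E_1=(0,0)$, $E_2=(1,0)$, $E_3=(0,1)$, $E_4=(p,q)$, where simplicity forces $p\neq0$, $q\neq0$ and $p+q\neq1$. A direct calculation then gives $F_1=\tfrac{q}{p+q-1}\,(1,-1)$ (intersecting $E_2E_4$ with the line $x+y=0$) and $F_2=(0,-q/p)$ (intersecting $l(E_2;E_1E_4)$ with the $y$-axis $E_1E_3$). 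Computing the direction vector,
\[
F_1-F_2=\frac{q}{p+q-1}\Bigl(1,\tfrac{q-1}{p}\Bigr),
\]
one sees it is a scalar multiple of $E_4-E_3=(p,\,q-1)=p\bigl(1,\tfrac{q-1}{p}\bigr)$, which establishes $F_1F_2\parallel E_3E_4$ (and incidentally $F_1\neq F_2$, since $q\neq0$).

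Everything apart from this last relation is bookkeeping. If a synthetic argument were preferred, one could instead recognize the configuration inside the complete quadrilateral on $E_1,\dots,E_4$ and derive $F_1F_2\parallel E_3E_4$ from the five parallelisms already established, but the coordinate computation above is the most economical route.
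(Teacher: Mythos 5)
Your proof is correct, and it follows the paper's decomposition exactly: the six parallelisms $F_iF_j \parallel E_kE_l$ (one for each way of writing $\{1,2,3,4\}=\{i,j\}\sqcup\{k,l\}$), of which five are immediate from \cref{def:dual} and only $F_1F_2 \parallel E_3E_4$ requires real work. Where you diverge is in the proof of that key relation. The paper first disposes of the degenerate case $E_1E_4 \parallel E_2E_3$ (where $F_1=E_4$, $F_2=E_3$ and there is nothing to prove) and otherwise invokes Pappus's hexagon theorem on the collinear triples $(F_1, E_4, E_2=F_3)$ and $(E_3, F_2, E_1=F_4)$; you instead normalize affinely to $E_1=(0,0)$, $E_2=(1,0)$, $E_3=(0,1)$, $E_4=(p,q)$ and compute. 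Your computation checks out: simplicity gives exactly $p\neq 0$, $q\neq 0$, $p+q\neq 1$, which is precisely what makes $F_1$ and $F_2$ well defined, and one gets $F_1-F_2=\frac{q}{p(p+q-1)}\,(E_4-E_3)$ with nonzero scalar. Your route buys uniformity and elementarity: no case distinction is needed, since at $p+q=0$ (i.e.\ $E_1E_4\parallel E_2E_3$) the same formulas simply return $F_1=E_4$, $F_2=E_3$; the price is the justification that the dual construction commutes with affine maps, which you correctly supply. The paper's Pappus argument is coordinate-free and shorter on the page, but its hexagon degenerates in the parallel case, forcing the separate treatment. One caveat: your closing suggestion that the sixth parallelism could be derived synthetically from the other five is essentially a restatement of affine Pappus rather than an independent alternative, but nothing in your main argument relies on it.
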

\begin{proof}
Of the parallelism conditions that need to be verified, all but one follow directly from the definition. 
For example, $F_1F_4$ and $E_2E_3$ are parallel since, by definition, $F_1$ lies on the parallel to 
$E_2E_3$ through $F_4$. 

The only non-trivial fact is that $F_1F_2 \parallel E_3E_4$. If 
$E_1E_4 \parallel E_2E_3$, we have $F_1 = E_4$, $F_2 = E_3$ and there is nothing to prove. 
Otherwise, this is an application of Pappus's hexagon 
theorem with the collinear triples $(F_1, E_4, E_2=F_3)$ and $(E_3, F_2, E_1=F_4)$.
\end{proof}

\begin{remark}
\label{rem:dual}
\begin{itemize}[noitemsep, topsep=2pt]
	\item If $E$ is a simple list of four points, $\slopes(E^*) = \slopes(E)$, but $E^* \not\sim E$ 
	in general (see \cref{fig:dual,fig:nonexample}). 
	\item It is nonetheless true that $(E^*)^* \sim E$, as can be observed in \cref{fig:nonexample}. 
	To prove it, note that, for every permutation $\sigma \in \mathfrak{S}_4$, we have 
	$E^{**}_{\sigma(1)}E^{**}_{\sigma(2)} 
	\parallel E^{*}_{\sigma(3)}E^{*}_{\sigma(4)} \parallel E_{\sigma(1)}E_{\sigma(2)}$ by \cref{lem:dual}.
	By \cref{rem:sim}, we conclude that $(E^{*})^* \sim E$.
	\item Changing the order of the points of $E$ in \cref{def:dual} does not change the point set $E^*$
	modulo $\mathbb{H}$. In other words, $(E_{\sigma(1)}, \ldots, E_{\sigma(4)})^* \sim (E_1, \ldots, E_4)^*$ 
	for any permutation $\sigma \in \mathfrak{S}_4$.
	This again follows from \cref{lem:dual} and \cref{rem:sim}. 
\end{itemize}
\end{remark}

We now define slope-generic sets. 

\begin{defin}
	A set $A\subset \rr$ is called \emph{slope-generic} if it has distinct slopes and satisfies the 
	following property: for every simple list $E$ of four points, $\slopes(E) \subseteq \slopes(A)$ 
	implies $E\emb A$ or $E^*\emb A$.
	\label{def:slopegeneric}
\end{defin}

\begin{example} 
\begin{itemize}[noitemsep, topsep = 2pt]
	\item The set $A = \{(-2, 2), (-1, 0), (0, 0), (0, 1)\}$ is not slope-generic. Indeed, for 
	$E = \big((1, -1), (-1, 0), (0, 0), (0, 1)\big)$, we have $\slopes(E) = \slopes(A)$. However, $E\not \emb A$ and 
	$E^*\not\emb A$, since $A$ is in convex position, unlike $E$ and $E^*$ (see \cref{fig:nonexample}).
	\item We will give examples of slope-generic sets in \cref{lem:example}. A careful inspection of the proof 
	of \cref{lem:example} reveals that any set of ``sufficiently random'' points is 
	slope-generic (see \cref{rem:random}).
\end{itemize}
\end{example}

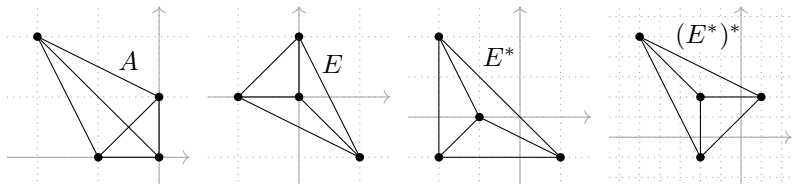
\begin{figure}[H]
\centering

\begin{tikzpicture}[scale = 0.8]
\begin{scope}[xshift = 0.2cm]
\draw[dotted, gray!70] (-2.5, -0.5) grid (0.5, 2.5);
\draw[->, gray!70] (-2.5, 0) -- (0.5, 0);
\draw[->, gray!70] (0, -0.5) -- (0, 2.5);
\node (0) at (-1, 0) {};
\node (1) at (0, 0) {};
\node (2) at (0, 1) {};
\node (3) at (-2, 2) {};
\draw (0.center) -- (1.center) -- (2.center) -- (3.center) -- (0.center);
\draw (0.center) -- (2.center);
\draw (1.center) -- (3.center); 
\node at (-0.5, 1.6) {$A$};
\fill (0) circle (2pt);
\fill (1) circle (2pt);
\fill (2) circle (2pt);
\fill (3) circle (2pt);
\end{scope}

\begin{scope}[xshift = 2.5cm, yshift = 1cm]
\draw[dotted, gray!70] (-1.5, -1.5) grid (1.5, 1.5);
\draw[->, gray!70] (-1.5, 0) -- (1.5, 0);
\draw[->, gray!70] (0, -1.5) -- (0, 1.5);
\node (0) at (-1, 0) {};
\node (1) at (0, 0) {};
\node (2) at (0, 1) {};
\node (3) at (1, -1) {};
\draw (0.center) -- (1.center) -- (2.center) -- (3.center) -- (0.center);
\draw (0.center) -- (2.center);
\draw (1.center) -- (3.center); 
\node at (0.55, 0.55) {$E$};
\fill (0) circle (2pt);
\fill (1) circle (2pt);
\fill (2) circle (2pt);
\fill (3) circle (2pt);
\end{scope}

\begin{scope}[scale = 2/3, xshift = 9.2cm, yshift = 1cm]
\draw[dotted, gray!70] (-2.75, -1.75) grid (1.75, 2.75);
\draw[->, gray!70] (-2.75, 0) -- (1.75, 0);
\draw[->, gray!70] (0, -1.75) -- (0, 2.75);
\node (0) at (-1, 0) {};
\node (1) at (1, -1) {};
\node (2) at (-2, -1) {};
\node (3) at (-2, 2) {};
\draw (0.center) -- (1.center) -- (2.center) -- (3.center) -- (0.center);
\draw (0.center) -- (2.center);
\draw (1.center) -- (3.center); 
\node at (-0.5, 1.5) {$E^*$};
\fill (0) circle (3pt);
\fill (1) circle (3pt);
\fill (2) circle (3pt);
\fill (3) circle (3pt);
\end{scope}

\begin{scope}[scale = 1/3, xshift = 29.3cm, yshift = 1cm]
\draw[dotted, gray!70] (-6.5, -2.5) grid (2.5, 6.5);
\draw[->, gray!70] (-6.5, 0) -- (2.5, 0);
\draw[->, gray!70] (0, -2.5) -- (0, 6.5);
\node (0) at (-2, -1) {};
\node (1) at (1, 2) {};
\node (2) at (-5, 5) {};
\node (3) at (-2, 2) {};
\draw (0.center) -- (1.center) -- (2.center) -- (3.center) -- (0.center);
\draw (0.center) -- (2.center);
\draw (1.center) -- (3.center); 
\fill[white] (-1.6, 5.1) circle (0.9cm);
\fill[white] (-2.2, 5.1) circle (0.9cm);
\fill[white] (-2.6, 5.1) circle (0.9cm);
\fill[white] (-1.3, 5.1) circle (0.9cm);
\node at (-1.6, 5) {$(E^*)^*$};
\fill (0) circle (6pt);
\fill (1) circle (6pt);
\fill (2) circle (6pt);
\fill (3) circle (6pt);
\end{scope}
\end{tikzpicture}

\caption{Examples of point sets with the same set of slopes.}
\label{fig:nonexample}
\end{figure}

\begin{lem}
Let $A$ be slope-generic. Then $A$ has the following property: for every simple set $K\subset \rr$ of at 
least five points, $\slopes(K) \subseteq \slopes(A)$ implies $K\emb A$.
\label{lem:embedding}
\end{lem}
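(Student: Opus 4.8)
The plan is to produce a single $\phi\in\mathbb{H}$ with $\phi(K)\subseteq A$ by reconstructing, for each point of $K$, its image in $A$ from purely local (four-point) data, and then invoking \cref{rem:sim} to glue these choices into an honest element of $\mathbb{H}$. Since $A$ is slope-generic it has distinct slopes, so every slope $s\in\slopes(K)\subseteq\slopes(A)$ is realized by a \emph{unique} pair of points of $A$; write $P_{ij}\subseteq A$ for the pair realizing $\slopes(K_iK_j)$. The goal then reduces to finding points $a_i\in A$ with $P_{ij}=\{a_i,a_j\}$ for all $i\neq j$: once this holds, the assignment $K_i\mapsto a_i$ preserves every pairwise slope, and \cref{rem:sim} supplies the required $\phi$.

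The local input is an intersection property: for any three indices $i,j,l$, the pairs $P_{ij}$ and $P_{il}$ meet. To see this I complete $\{K_i,K_j,K_l\}$ to a four-point sublist $E=(E_1,E_2,E_3,E_4)$ of $K$ with $E_1=K_i$, $E_2=K_j$, $E_3=K_l$ (possible as $k\ge5$). By \cref{def:slopegeneric}, either $E\emb A$ or $E^*\emb A$, realized by some $\phi'\in\mathbb{H}$. If $E\emb A$, then $\phi'(E_1)$ lies on both $P_{ij}$ and $P_{il}$. If instead $E^*\emb A$, I use \cref{lem:dual}: the slopes $\slopes(E_1E_2)$ and $\slopes(E_1E_3)$ are carried by the edges $F_3F_4$ and $F_2F_4$ of $F=E^*$ (the index sets $\{3,4\}$ and $\{2,4\}$ being the complements of $\{1,2\}$ and $\{1,3\}$), which share the vertex $F_4$, so $\phi'(F_4)$ lies on both $P_{ij}$ and $P_{il}$. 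The mechanism is that two slopes sharing the common index $i$ correspond under duality to two complementary index-pairs that still share an index, because any two $2$-subsets of a $3$-set intersect; this makes the intersection property insensitive to the dual ambiguity.

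Now fix $i$ and consider the family $\{P_{ij}:j\neq i\}$. As $K$ is simple, the slopes $\slopes(K_iK_j)$ are pairwise distinct in $j$ (a repetition would force three points of $K$ collinear), so these are $k-1\ge4$ \emph{distinct} pairs, and by the previous step they pairwise intersect. An elementary fact about $2$-element sets — a family of at least four distinct pairwise-intersecting pairs has a common element, the only pairwise-intersecting family without one being a single triangle of three edges — then yields a well-defined point $a_i\in A$ lying on every $P_{ij}$. This is exactly where $k\ge5$ is used: for four points one sits in the triangle exception, which is precisely the four-point dual ambiguity that \cref{def:slopegeneric} cannot resolve.

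Finally I check injectivity and that $P_{ij}=\{a_i,a_j\}$. Since $a_i,a_j\in P_{ij}$ by construction, it suffices to rule out $a_i=a_j$. If $a_i=a_j=:a$ for some $i\neq j$, pick two more indices $l,m$ and look at the sublist on $\{K_i,K_j,K_l,K_m\}$: its six slopes are realized by the six edges of a single four-point subset of $A$ (unique, as $A$ has distinct slopes), namely $P_{ij},P_{il},P_{im},P_{jl},P_{jm},P_{lm}$; but the first five all contain $a$, which is impossible, since a vertex of the complete graph on four points has degree three. Hence $a_i\neq a_j$, so $P_{ij}=\{a_i,a_j\}$ and $\slopes(a_ia_j)=\slopes(K_iK_j)$, and \cref{rem:sim} delivers $\phi\in\mathbb{H}$ with $\phi(K)\subseteq A$. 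The main obstacle throughout, and the reason the hypothesis of at least five points is needed, is this four-point dual ambiguity: a single quadruple only fixes its image up to replacing $E$ by $E^*$, turning a triangle of slopes into a star; the argument is arranged so this ambiguity is never resolved quadruple-by-quadruple but is washed out by combining the dual-insensitive intersection property with the counting that four or more concurrent pairs force a genuine common vertex.
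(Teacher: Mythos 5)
Your proof is correct, but it follows a genuinely different route from the paper's. The paper resolves the dual ambiguity quadruple by quadruple: it first shows, by contradiction, that every four-point subset $E\subset K$ satisfies $E\emb A$ (not merely $E^*\emb A$), using a fifth point of $K$ and a slope-counting argument to rule out the dual case; it then glues the resulting homotheties $\phi_E$ by checking that they agree on overlapping triples, since an affine map is determined by three non-collinear points. You never resolve the ambiguity at all: you extract from \cref{def:slopegeneric} only the dual-insensitive fact that the unique pairs $P_{ij}, P_{il}\subset A$ realizing two slopes with a common index must intersect (because complementary $2$-subsets of a $4$-set that share an index have complements that share an index, exactly as \cref{lem:dual} dictates), then invoke the sunflower fact that four or more distinct pairwise-intersecting $2$-sets have a common element to reconstruct the image points $a_i$ directly, prove injectivity by a degree count in $K_4$, and finish with a single application of \cref{rem:sim}. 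Your route localizes the use of $k\geq 5$ cleanly in the combinatorial step and explains conceptually why four points are genuinely exceptional (the triangle configuration of pairs is the dual ambiguity); the paper's route yields the stronger intermediate statement that every quadruple of $K$ embeds directly into $A$, which is geometric information your argument bypasses. One step of yours deserves an explicit line: in the injectivity argument, the assertion that the six pairs $P_{ij},\ldots,P_{lm}$ are the six \emph{distinct} edges of a single four-point subset $B\subseteq A$ requires applying \cref{def:slopegeneric} to the quadruple $(K_i,K_j,K_l,K_m)$ and noting that, since subsets of $A$ have distinct slopes and the embedding (direct or dual, via \cref{lem:dual}) gives a slope-preserving bijection between the edges of the quadruple and the edges of $B$, the quadruple in fact has six distinct slopes — without this, collisions such as $P_{il}=P_{jm}$ could a priori reduce your five pairs to three and the degree-three contradiction would evaporate. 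Also, completing a triple of points of $K$ to a quadruple needs only $k\geq 4$, not $k\geq 5$ as you state, though this is harmless.
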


\begin{proof}
Let $K$ be simple with $|K|\geq 5$ and $\slopes(K) \subseteq \slopes(A)$. 

\medbreak

We start by proving that, for every subset $E\subset K$ of four points, we have $E \emb A$. By 
contradiction, there is a quadruple $E=(E_1, E_2, E_3, E_4)$ in $K$ such that $E\not\emb A$. As $A$ is 
slope-generic, $E^*\emb A$. Thus, there exist four points $A_i\in A$ and a map $\phi \in \mathbb{H}$ 
such that $\phi(E^*_i) = A_i$ for $i\in \{1,\ldots,4\}$, where $(E^*_1, E^*_2, E^*_3, E^*_4)=E^*$.

Let $P\in K\setminus E$ and $E' = (P, E_2, E_3, E_4)$. We also have $E'\emb A$ or $E'^*\emb A$, so 
there is a subset $B \subseteq A$ of size $4$ which is homothetic to $E'$ or $E'^*$. Let $\psi \in\mathbb{H}$ 
be the map corresponding to $(E' \text{ or } E'^*) \emb B$. Notice that 
$$\slopes(A_1A_2) \overset{\phi}{=}  \slopes(E^*_1E^*_2) = \slopes(E_3E_4) 
\in \slopes(E') = \slopes(E'^*) \overset{\psi}{=} \slopes(B).$$
As we assumed that $A$ has distinct slopes, this implies that $A_1, A_2\in B$. The same argument with 
$\slopes(A_1A_3)$ and $\slopes(A_1A_4)$ shows that all the $A_i$'s are in $B$, so $B=\{A_1, A_2, A_3, 
A_4\}$.

To summarize, we have 
$(E' \text{ or } E'^*) \overset{\psi}{\sim} B =  \{A_1, \ldots, A_4\} \overset{\phi}{\sim} E^* $.
In particular, we deduce that $\slopes(E') = \slopes(B) = \slopes(E)$. Therefore, 
$$\{\slopes(PE_2), \slopes(PE_3), \slopes(PE_4)\} = \{\slopes(E_1E_2), \slopes(E_1E_3), \slopes(E_1E_4)\},$$
because all the slopes of $B$ are distinct. We can repeat the preceding argument with $(P, E_1, E_3, E_4)$, $(P, E_1, E_2, E_4)$ and $(P, E_1, E_2, E_3)$ in place of $E'$ and deduce the equalities
$$\begin{cases}
\{\slopes(PE_1), \slopes(PE_3), \slopes(PE_4)\} = \{\slopes(E_2E_1), \slopes(E_2E_3), \slopes(E_2E_4)\}\\
\{\slopes(PE_1), \slopes(PE_2), \slopes(PE_4)\} = \{\slopes(E_3E_1), \slopes(E_3E_2), \slopes(E_3E_4)\}\\
\{\slopes(PE_1), \slopes(PE_2), \slopes(PE_3)\} = \{\slopes(E_4E_1), \slopes(E_4E_2), \slopes(E_4E_3)\}.
\end{cases}$$
Taking the union of the left-hand sides and right-hand sides yields $$\{\slopes(PE_1), \slopes(PE_2), 
\slopes(PE_3), \slopes(PE_4)\} = \slopes(E),$$
which is a contradiction since $\lvert\slopes(E)\rvert = \lvert\slopes(B)\rvert = 6$.

\medbreak

For every $E \subset K$ with $|E|=4$, we have proven that there exists a transformation $\phi_E \in 
\mathbb{H}$ such that $\phi_E(E) \subseteq A$. We will now prove that all $\phi_E$ are equal, which 
concludes the proof of the lemma. It is sufficient to prove that $\phi_{E} = \phi_{E'}$ whenever $E, E'$ are 
two subsets of four elements of $K$ with $|E \cap E'| = 3$. 
Let $E\cap E' = \{E_1, E_2, E_3\}$. For $1\leq i< j\leq 3$, we have
$$\slopes(\phi_E(E_i)\phi_E(E_j)) = \slopes(E_iE_j) = \slopes(\phi_{E'}(E_i)\phi_{E'}(E_j)).$$
Since $A$ has distinct slopes, it must be the case that $\{\phi_E(E_i), \phi_E(E_j)\} = \{\phi_{E'}(E_i),\phi_{E'}
(E_j)\}$. It follows that the restrictions $\phi_E|_{E\cap E'}$ and $\phi_{E'}|_{E\cap E'}$ are equal. An affine 
transformation is uniquely determined by its action on three non-collinear points, so $\phi_E = \phi_{E'}$ as 
claimed. 
\end{proof}

\begin{restatable}{lem}{existence}\label{lem:existence}
There exists an algorithm to compute a slope-generic set of size $n$ in time polynomial in $n$. Moreover, 
the coordinates of the constructed points are integers with polynomially many digits, and every slope 
determined by the set is an integer.
\end{restatable}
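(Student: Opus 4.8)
The plan is to place the points on a parabola and to choose the abscissae to be a suitably generic set of small integers. Concretely, I would search for a set $\Lambda = \{\lambda_1, \dots, \lambda_n\}$ of integers in an interval $\{1, \dots, N\}$ and put $A = \{(\lambda, \lambda^2) : \lambda \in \Lambda\} \subset \rr$. The slope of the segment joining $(\lambda, \lambda^2)$ and $(\mu, \mu^2)$ equals $\lambda + \mu$, so every slope determined by $A$ is an integer and $\slopes(A) = \{\lambda + \mu : \lambda \neq \mu \in \Lambda\}$. No three points of a parabola are collinear, so $A$ is automatically simple, and $A$ has distinct slopes exactly when $\Lambda$ is a Sidon set (all pairwise sums distinct); in that case the map sending a slope $s$ to the unique pair $\{\lambda, \mu\} \subseteq \Lambda$ with $\lambda + \mu = s$ is well defined. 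If $N$ is polynomial in $n$, all coordinates ($\lambda \le N$, $\lambda^2 \le N^2$) and all slopes are integers with $\mathcal{O}(\log n)$ digits, and $A$ is clearly computable in polynomial time; the entire difficulty is to choose $\Lambda$ so that $A$ is moreover slope-generic.

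For the genericity, let $E$ be a simple list of four points with $\slopes(E) \subseteq \slopes(A)$, and assume first that its six slopes are distinct. Using the Sidon injection above, each slope of $E$ is the sum of a unique pair of elements of $\Lambda$, so $\slopes(E)$ records a graph $G$ with six edges on a subset of $\Lambda$. The core claim I would establish is that, for a good $\Lambda$, the vertex set of $G$ has exactly four elements $\{a,b,c,d\}$ and $G$ is the complete graph $K_4$ on them. Granting this, $B := \{(a,a^2), (b,b^2), (c,c^2), (d,d^2)\} \subseteq A$ satisfies $\slopes(B) = \slopes(E)$. I would then invoke \cref{lem:polynomial}, which constrains the six slopes of any quadruple, to conclude that a generic sextuple of slopes is realised by at most two quadruples modulo $\mathbb{H}$, interchanged by the duality of \cref{def:dual}; since both $E$ and $B$ realise $\slopes(E)$, this forces $E \sim B$ or $E \sim B^*$. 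In the first case $E \emb A$, and in the second $E^* \sim (B^*)^* \sim B \subseteq A$ by \cref{rem:dual}, so $E^* \emb A$. Either way the condition of \cref{def:slopegeneric} holds.

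It remains to secure the core claim, i.e.\ to rule out every configuration in which the six pairs do not assemble into a single $K_4$. Because $E$ is an honest quadruple, its slopes satisfy the polynomial relation of \cref{lem:polynomial}; substituting $s = \lambda + \mu$ turns that relation, for each combinatorial type describing how the six pairs overlap, into a fixed bounded-degree polynomial in the (at most twelve) elements of $\Lambda$ involved. The ``$K_4$'' types make this polynomial vanish identically (they encode the slopes of a genuine parabola quadruple $B$ or of its dual $B^*$), so they are the allowed types; every other type should give a polynomial that is not identically zero, hence a constraint that a generic $\Lambda$ can avoid. I would therefore choose $\Lambda$ to avoid, simultaneously, the Sidon collisions $\lambda + \mu = \lambda' + \mu'$ and all these non-$K_4$ relations. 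There are only $\mathcal{O}(n^{12})$ such forbidden equations, each of bounded degree in boundedly many variables, so a uniformly random $\Lambda \subseteq \{1, \dots, N\}$ with $N = \mathrm{poly}(n)$ avoids all of them with positive probability by Schwartz--Zippel; this is derandomised in polynomial time by the method of conditional expectations, since each relevant probability is computable in polynomial time. The resulting $\Lambda$ is Sidon (giving distinct integer slopes) and yields a slope-generic $A$.

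The main obstacle is the step I just asserted: that for every combinatorial type other than $K_4$ the substituted relation of \cref{lem:polynomial} is not identically zero. Equivalently, I must show that a ``scattered'' choice of six sums from $\Lambda$ is generically not the slope set of any quadruple, and here one cannot avoid the degenerate cases. In particular, since a Sidon set has no two disjoint pairs with equal sum, $A$ contains no two parallel segments; thus if $E$ had a pair of parallel edges (so that $\slopes(E)$ had fewer than six elements), then by \cref{lem:dual} its dual $E^*$ would have a parallel pair as well, and neither $E$ nor $E^*$ could embed into $A$. Hence I must also guarantee that no such ``trapezoidal'' $E$ has all its slopes in $\slopes(A)$, which (writing $s_{ij}$ for $\slopes(E_iE_j)$) amounts to forbidding the single relation
\[ \tfrac{1}{s_{13}-s} - \tfrac{1}{s_{23}-s} = \tfrac{1}{s_{14}-s} - \tfrac{1}{s_{24}-s}, \qquad s = s_{12} = s_{34}, \]
from being satisfied accidentally by elements of $\slopes(A)$. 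Verifying that this relation, and all of its specialisations in which the pairs share elements, reduce to non-trivial avoidable polynomial conditions is the delicate bookkeeping at the heart of the argument; I expect it, rather than the probabilistic construction, to be where the real work lies.
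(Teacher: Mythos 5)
Your construction is, at its core, the same as the paper's: the paper also places the points on the parabola $y=x^2$ so that slopes become sums of abscissae (it takes abscissae $50^{c_i}$ for a $B_3$-sequence $(c_i)$ computed by a greedy algorithm), and it also reduces slope-genericity to deciding \emph{for which coincidence patterns among the twelve abscissae} the relation of \cref{lem:polynomial} can hold. Your Schwartz--Zippel/conditional-expectations avoidance of ``bad'' polynomial relations is a legitimate substitute for the paper's $B_3$-plus-powers-of-$50$ device: both serve exactly the same purpose, namely to guarantee that the numerical equation $Q(\dots)=0$ can only hold when the corresponding pattern makes it hold \emph{formally}, as a polynomial identity. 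You also correctly spot the trapezoid obstruction (a Sidon set determines no two parallel segments, so quadruples with a repeated slope must be excluded outright rather than embedded), which the paper rules out as part of the same case analysis.

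The genuine gap is the one you flag yourself: the entire proof hinges on the classification asserting that the only overlap patterns and edge-assignments for which the substituted polynomial vanishes identically are the two $K_4$ patterns (one giving $E\sim B$, the other $E^*\sim B$), and you never prove it. This is not peripheral bookkeeping; it is the actual mathematical content of the paper's proof of \cref{lem:example}. There, once the formal identity in $\mathbb{Z}[T_1,\dots,T_r]$ is obtained, each factor $Z_i-Z_j$ of $Q$ is shown to be a primitive degree-one polynomial, hence prime, so unique factorization forces a matching of the three factors on each side up to signs; an exhaustive enumeration of the $24$ sign/permutation choices (done by computer, \cref{appendix:program}) then shows that exactly one matching is consistent with the inequations, and that, up to the harmless swaps $X_{2i-1}\leftrightarrow X_{2i}$, only the two patterns (i) and (ii) survive. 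Without this, both your ``core claim'' and your subsequent step that a sextuple realized by $B$ admits at most the two realizations $B$ and $B^*$ modulo $\mathbb{H}$ are unsupported --- note in particular that $\slopes(E)=\slopes(B)$ is \emph{not} a generic sextuple (it already satisfies $Q=0$ in the $B$-assignment), so ruling out the remaining assignments of slopes to edges requires checking each of the finitely many alternatives separately, which is precisely the case analysis you postponed. In short: the plan is sound and essentially parallel to the paper's, but the hard kernel is missing.
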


We postpone the proof of this lemma to \cref{sec:prooflemma} and concentrate 
on the main theorem.

\subsection{Proof of \NP-completeness} 

\label{sec:proofnpcompl}

Let us recall the problem that will be shown to be \NP-complete, as well as some variants which we mentioned in the introduction.

\begin{defin}The \emph{slope-constrained complete graph drawing problem} (\SCGD for short) [\emph{restricted 
to ``$n\leq f(k)$''}], is the following decision problem.

\begin{inputlist} 
	\item A set $S$ of $n$ slopes;
	\item A natural number $k\leq n$ [$\leq f(k)$].
\end{inputlist}

\begin{outputlist} 
	\item \texttt{YES} if there exists a simple set $K$ of $k$ points in the plane such that 
	$\slopes(K)\subseteq S$; 

	\item \texttt{NO} otherwise.
\end{outputlist}

\end{defin}

The condition $k\leq n$ is not restrictive as every simple set of $k$ points determines at least $k$ 
slopes. 

For the model of computation, we will work with the common logarithmic-cost integer 
RAM model~\cite{Boas}. The size of the input $(S,k)$ is measured by the number of bits used 
to represent $k$ and the slopes of $S$ ($k$ may be ignored since $k\leq |S|$). 

This model of computation does not allow the manipulation of arbitrary real numbers.
This is not an issue: to compare the computational complexity of two problems, it is necessary
to use similar models of computation for both. If we were to use, say, the real RAM model, we 
would not be able to talk about \NP-complete problems in the usual sense.

In order to study the complexity
class of the \SCGD problem, we need to specify what inputs are allowed. Since the input slopes $s\in S$ 
must be representable in our model of computation, the two most natural choices might be to only consider 
rational slopes, or integer slopes. We will assume that the slopes are integers. This is not a problem: 
the more we restrict the possible inputs (keeping the same model of computation), the easier 
the problem becomes. Since the \SCGD problem is 
already \NP-complete when restricted to integer slopes, it will still be \NP-hard (and thus \NP-complete, by 
the same proof) for rational slopes.\footnote{Or for any other reasonable choice containing 
the integers, representable within the logarithmic-cost integer RAM model.}

\begin{theoreme}
The \SCGD problem is \NP-complete.
\label{thm:main}
\end{theoreme}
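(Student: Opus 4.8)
The plan is to prove the two halves of \NP-completeness separately: membership in \NP, and \NP-hardness by a polynomial reduction from \textsc{Clique}. The hardness half is where all the geometric content enters, and it is driven entirely by the two structural facts about slope-generic sets, namely the rigidity statement \cref{lem:embedding} and the constructibility statement \cref{lem:existence}.

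For membership in \NP, I would take as a certificate a list $K$ of $k$ distinct points with integer coordinates. Verification is polynomial: for each of the $\binom{k}{2}$ pairs one computes $\slopes(K_iK_j)$ and tests membership in $S$, and for each of the $\binom{k}{3}$ triples one tests non-collinearity by the sign of a determinant. Note $k \leq |S| = n$, so these counts are polynomial in the input size. The only delicate point is that a \texttt{YES} instance admits such a certificate with polynomially many bits. Given any real witness $K^0$, record the slope it assigns to each pair; the configurations inducing exactly those slopes form a linear subspace $W \subseteq \mathbb{R}^{2k}$ cut out by the homogeneous equations $(y_j - y_i) = s(x_j - x_i)$ (or $x_j = x_i$ for $s = \infty$), whose coefficients are the integer slopes of $S$. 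Since $K^0 \in W$ is simple and the non-simple configurations lie in a finite union of proper subvarieties of $W$ (coincidence of a pair, or collinearity of a triple, the latter being a degree-$2$ condition), the rational subspace $W$ contains a simple configuration of polynomial bit-size, by standard bounds on integer points of a rational affine space avoiding a bounded-degree variety.

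For \NP-hardness I reduce from \textsc{Clique}. Given a graph $G$ on vertices $\{1, \dots, m\}$ and a target $k \geq 5$ (instances with $k < 5$ are polynomial, so \textsc{Clique} stays \NP-complete under this restriction), I first invoke \cref{lem:existence} to compute in polynomial time a slope-generic set $A = \{A_1, \dots, A_m\}$ with integer coordinates of polynomial size and pairwise distinct integer slopes. I then set $S := \{\slopes(A_iA_j) : \{i,j\} \in E(G)\}$ and output the \SCGD instance $(S, k)$, which is computable in polynomial time with integer slopes as required by \cref{def:scgd}. If $C$ is a $k$-clique of $G$, then $K := \{A_i : i \in C\}$ is a simple set of $k$ points (a subset of a set with distinct slopes is simple) and every slope it determines lies in $S$ by construction, so $(S,k)$ is a \texttt{YES} instance. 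Conversely, if $K$ is simple with $|K| = k \geq 5$ and $\slopes(K) \subseteq S \subseteq \slopes(A)$, then \cref{lem:embedding} gives $\phi \in \mathbb{H}$ with $\phi(K) \subseteq A$; since maps in $\mathbb{H}$ preserve slopes, $\phi(K) = \{A_i : i \in C\}$ is a $k$-subset of $A$ with all its slopes in $S$. For $i,j \in C$ we have $\slopes(A_iA_j) \in S$, and the distinctness of the slopes of $A$ forces $\{i,j\} \in E(G)$, so $C$ is a $k$-clique. This establishes the reduction.

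The main obstacle is not computational but conceptual: the encoding must be \emph{faithful}, meaning no configuration lying outside $A$ (up to $\mathbb{H}$) may satisfy the forbidden-slope constraints and thereby create a spurious solution. This faithfulness is precisely what slope-genericity buys, via \cref{lem:embedding}, and it is also why the threshold $k \geq 5$ is essential: at $k = 4$ the dual construction of \cref{def:dual} yields a second, non-homothetic realization $E^*$ of the same slopes, so rigidity, and hence the reduction, would break. Consequently the two deferred results, \cref{lem:embedding} and especially the polynomial-time construction in \cref{lem:existence}, carry all the real work. Finally, to sharpen the statement to \NP-completeness already under $n \leq 2k^2$, I would run the same reduction from the balanced variant of \textsc{Clique} with $k = m/2$ (still \NP-complete by a standard padding argument), for which $n = |S| \leq \binom{m}{2} = \binom{2k}{2} = 2k^2 - k < 2k^2$.
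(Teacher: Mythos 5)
Your hardness half is essentially the paper's own proof: the same reduction from \lang{CLIQUE}, the same dismissal of $k\leq 4$, the same set $S$ built from a slope-generic set $A$ supplied by \cref{lem:existence}, the same use of \cref{lem:embedding} plus distinctness of slopes to extract a clique from a \texttt{YES} answer, and your $n\leq 2k^2$ sharpening via the balanced clique problem matches the paper's remark. Where you genuinely diverge is membership in \NP. The paper deliberately does \emph{not} certify with a point set: its witness is the list of triples $(i,j,s_{i,j})$ assigning a slope of $S$ to each pair of indices, and verification amounts to checking that the associated linear system $Y_j-Y_i=s_{i,j}(X_j-X_i)$ has a non-trivial solution and that $s_{i,j}\neq s_{j,l}$ for distinct indices (which forces any non-trivial solution to be simple). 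You instead certify with an integer configuration, and must therefore prove that every \texttt{YES} instance admits one of polynomial bit-size --- precisely the technical point the paper's footnote flags as its reason for choosing the indirect witness. Your sketch of this point is correct but compressed: to make it rigorous, compute a rational basis of your subspace $W$ by Gaussian elimination (bit-sizes stay polynomial), observe that the non-simple locus is cut out by at most $\binom{k}{2}+\binom{k}{3}$ polynomials of degree at most $2$ in the parameters, each not identically zero on $W$ because the real witness $K^0$ avoids it, then apply a Schwartz--Zippel argument over an integer grid of polynomial size and clear denominators. What each approach buys: the paper's slope-assignment witness makes the polynomial-size property immediate and needs no auxiliary lemma, at the cost of a less geometric certificate; yours keeps the certificate as an actual drawing, at the cost of this extra (standard but genuinely necessary) bit-size argument.
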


\begin{proof}
We begin by showing that the \SCGD problem is in \NP. Suppose that there exists a simple set 
$K=\{K_1, \ldots, K_k\}$ of $k$ points with $\slopes(K)\subseteq S$.\footnote{We may not directly use 
$K$ as a witness, as the coordinates of the points of $K$ could be arbitrary real numbers. 
Since the slopes are integers, it is true that there always exists another choice of $K$ whose points 
have integer coordinates. However, we would also need to explain that $K$ can 
be chosen to be representable with polynomially many bits. Instead, we choose a more indirect 
witness already containing all the necessary information.}
Let $s_{i, j}\coloneqq \slopes(K_iK_j)
\in S$, for $1\leq i<j\leq k$. Consider the system of linear equations $Y_j-Y_i = s_{i,j}(X_j-X_i)$. A non-trivial 
solution corresponds to an instance of a simple set of $k$ points with slopes contained in $S$. Let the 
witness be the list of triples $(i, j, s_{i, j})$, for $1\leq i<j\leq k$. It is polynomial in the size of the input 
$(S, k)$. If the witness is given, verifying that the corresponding system of linear equations has a non-trivial 
solution is also polynomial in the size of the input. Finally, it is a polynomial time check to verify that 
$s_{i,j}\in S$ and that $s_{i,j} \neq s_{j,k}$ for all distinct $i,j,k$ (the latter condition ensuring that any 
non-trivial solution of the system yields a simple point set).

We now prove that \SCGD is \NP-hard, by showing that \lang{CLIQUE}\footnote{The 
\lang{CLIQUE} decision problem is the following: given a graph $G$ and a positive integer $k$, decide 
whether $G$ contains a clique of size $k$. It is one of the first problems that was shown to be \NP-
complete~\cite{Karp}.} can be polynomially reduced to the \SCGD problem. 

Let $G=(V, E)$ be a finite graph and let $k$ be a positive integer. If $k\leq 4$, solving the clique problem 
with input $(G,k)$ takes polynomial time. Therefore, we only consider the case $k\geq 5$. The idea is to 
consider an embedding of $V$ into a slope-generic set. We construct a slope-generic set $A$ of size $|V|$ in 
polynomial time using the algorithm of \cref{lem:existence}. Fix a bijection $f:V\to A$. Let 
$$S = \{\slopes(f(v)f(w)) \mid vw \text{ is an edge in }G\}\subset \mathbb{Z}.$$
We execute the hypothetical \SCGD algorithm with input $(S, k)$. We claim that the output of the algorithm 
(\texttt{YES} or \texttt{NO}) is exactly the answer to the \lang{CLIQUE} problem with input $(G, k)$. 

\begin{itemize}[noitemsep, topsep=2pt]
	\item If the output is \texttt{NO}, there could not have been a $k$-clique in $G$. Indeed, by 
	contraposition: let $H$ be a $k$-clique in $G$. Then $f(H)$ is a simple set of $k$ points 
	in the plane having slopes in~$S$.

	\item If the output is \texttt{YES}, there is a simple set $K$ in the plane of size $k$ with $\slopes(K) 
	\subseteq S \subseteq \slopes(A)$. As $A$ is slope-generic, $K\emb A$ by \cref{lem:embedding}. 
	Thus, there is no loss of generality in assuming that $K\subseteq A$. 
	The proof of the claim is completed by 
	showing that $f^{-1}(K)$ is a $k$-clique of $G$. Let $A_1, A_2\in K$. We know that $\slopes(A_1A_2)
	\in S$, which means $\slopes(A_1A_2) = \slopes(f(v)f(w))$ for some edge $vw$ in $G$. As $A$ has 
	distinct slopes, we deduce that $\{A_1, A_2\} = \{f(v), f(w)\}$, which implies that $f^{-1}(A_1)f^{-1}(A_2)$ is 
	an edge of $G$. 
\end{itemize}

To conclude, we check that the reduction is polynomial-time (in the size of $G$). 

\begin{itemize}[noitemsep, topsep=2pt]
	\item By \cref{lem:existence}, the set $A$ can be computed in polynomial time (with respect to
	$|V|$). The coordinates of the points of $A$ have polynomially many digits, so the computation time of 
	each $\slopes(f(v)f(w))$ is polynomial in $|V|$.\footnote{By construction (see the proof of 
	\cref{lem:existence} in \cref{sec:prooflemma}), the slope of the line $f(v)f(w)$ is an integer, 
	which is just the sum of the $x$-coordinates of $f(v)$ and $f(w)$.} Thus, the computation of $S$ is 
	polynomial-time in the size of~$G$.
	
	\item The size of the input $(S, k)$ is polynomial in the size of $G$, which concludes the proof. 
	\qedhere
\end{itemize}
\end{proof} 

\begin{remark}
It is well-known that the \lang{HALFCLIQUE}\footnote{The \lang{HALFCLIQUE} problem is the task of 
deciding, given a graph $G$ as input, whether $G$ contains a clique of size $\lceil n/2\rceil$, where $n$ is 
the number of vertices of $G$.} problem is \NP-complete~\cite[Chapter~7]{Sipser}. We can apply the same 
proof wih $k = \lceil |V|/2\rceil$ to get a reduction from \lang{HALFCLIQUE} to the \SCGD problem. With the 
notation from the proof, we have $|S|\leq {|V| \choose 2}\leq 2k^2$. Therefore, the \SCGD problem 
restricted to $n\leq 2k^2$ is also \NP-complete. No attempt has been made here to reduce the constant in 
the inequality.
\end{remark}


\subsection{Construction of slope-generic sets}

\label{sec:prooflemma}

This section is devoted to the proof of \cref{lem:existence}. 
The following lemma gives a  condition for six real numbers to constitute the slope set of a set of four points.

\begin{lem}\label{lem:polynomial}
Let $(m_{i,j})_{1\leq i < j \leq 4}$ be six real numbers. Assume that there exist four distinct points 
$E_1, \ldots, E_4$ in the plane such that $\slopes(E_iE_j) = m_{i,j}$ for all $1\leq i < j \leq 4$. Then 
$$Q(m_{1,2}, m_{1,3}, m_{1,4}, m_{2,3}, m_{2,4}, m_{3, 4}) = 0,$$
where $Q$ is the polynomial 
$$Q(z_1, \ldots, z_6) \coloneqq (z_3-z_5)(z_6-z_2)(z_4-z_1)-(z_2-z_4)(z_5-z_1)(z_6-z_3).$$
\end{lem}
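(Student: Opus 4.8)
The plan is to treat the conclusion as an algebraic identity in the coordinates of the four points and to verify it directly, with no appeal to genericity. Write $E_i = (x_i, y_i)$. Since the six numbers $m_{i,j}$ are assumed to be realised as (finite) slopes of the segments $E_iE_j$, every pair of points has distinct $x$-coordinates, so $x_i \neq x_j$ for $i \neq j$ and each $m_{i,j} = (y_j - y_i)/(x_j - x_i)$ is well defined with nonzero denominator. I would substitute these expressions into $Q$ and clear denominators; it then suffices to check that the resulting polynomial in the $x_i, y_i$ vanishes identically.

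The observation that avoids a brute-force expansion is that each of the six linear factors occurring in $Q$ is a difference of two slopes sharing a common endpoint. Indeed, with $z_1 = m_{1,2}, \ldots, z_6 = m_{3,4}$, one checks that $z_3 - z_5 = \slopes(E_1E_4) - \slopes(E_2E_4)$ (common vertex $E_4$), $z_6 - z_2 = \slopes(E_3E_4) - \slopes(E_1E_3)$ (common vertex $E_3$), $z_4 - z_1 = \slopes(E_2E_3) - \slopes(E_1E_2)$ (common vertex $E_2$), and likewise $z_2 - z_4$, $z_5 - z_1$, $z_6 - z_3$ each pair two segments meeting at a common point. For three points with pairwise distinct $x$-coordinates I would use the elementary identity
\[
\slopes(PQ) - \slopes(PR) = \frac{-D(P,Q,R)}{(x_Q - x_P)(x_R - x_P)}, \qquad D(P,Q,R) \coloneqq (x_Q - x_P)(y_R - y_P) - (x_R - x_P)(y_Q - y_P),
\]
where $D$ is the signed-area determinant, invariant under cyclic permutations of $P,Q,R$ and changing sign under a transposition.

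Applying this identity to each of the six factors rewrites both products appearing in $Q$ as a product of three determinants $D(E_i,E_j,E_k)$ over a product of six differences $x_i - x_j$. A short sign computation shows that each of the two denominators equals, up to sign, the common quantity $\prod_{1\le i<j\le 4}(x_i - x_j)$, so placing everything over this common denominator turns $Q$ into a single fraction whose numerator is a $\mathbb{Z}$-linear combination of products of three determinants. Reducing each determinant to the standard increasing order of its indices via the cyclic and antisymmetry rules, both numerator terms collapse to $\pm\, D(E_1,E_2,E_3)\,D(E_1,E_2,E_4)\,D(E_1,E_3,E_4)$ with opposite signs; the numerator therefore vanishes identically and $Q = 0$. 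The only genuine work—and the step most prone to error—is the sign bookkeeping: matching each factor to the correctly oriented determinant and confirming that the two denominators agree up to sign. Degenerate configurations (three collinear $E_i$) pose no difficulty, since they merely make some determinant vanish while the denominators remain nonzero, so the identity persists throughout.
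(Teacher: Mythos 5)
Your proof is correct, and it takes a genuinely different route from the paper's. The paper's proof is linear-algebraic: after translating so that $E_1=(0,0)$, the six relations $y_j-y_i=m_{i,j}(x_j-x_i)$ become a homogeneous linear system in the six unknowns $x_2,x_3,x_4,y_2,y_3,y_4$; the distinct points themselves furnish a non-trivial solution, so the $6\times 6$ determinant of the system must vanish, and that determinant computes to exactly $Q(m_{1,2},\ldots,m_{3,4})$. You instead verify the vanishing of $Q$ as an explicit rational identity in the coordinates, using the observation that every linear factor of $Q$ is a difference of two slopes at a common vertex, hence of the form $-D(P,Q,R)/\bigl((x_Q-x_P)(x_R-x_P)\bigr)$. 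I checked your sign bookkeeping and it goes through: the two denominators are $+\prod_{i<j}(x_i-x_j)$ and $-\prod_{i<j}(x_i-x_j)$ respectively, and after cyclic reordering the two numerators are $-D_{123}D_{124}D_{134}$ and $+D_{123}D_{124}D_{134}$ (writing $D_{ijk}=D(E_i,E_j,E_k)$), so the two triple products in $Q$ are in fact equal as rational functions and their difference vanishes identically. As for what each approach buys: the paper's argument is shorter, explains where $Q$ comes from (it is the determinant of the slope-constraint system, which is how one would discover the polynomial in the first place), and hides all sign issues inside a single determinant evaluation; yours is more elementary and self-contained, exhibits an explicit factorization of each term of $Q$ as a product of signed areas over a Vandermonde-type product, and makes transparent that the identity survives collinear degenerations. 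Both proofs rest on the same standing hypothesis, namely that the $m_{i,j}$ are real (finite), which forces the $x_i$ to be pairwise distinct.
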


\begin{proof}We can suppose that $E_1 = (0,0)$, translating the four points if necessary. Consider the 
linear system given by the six equations $$y_j - y_i - m_{i, j} (x_j -x_i) = 0, \quad 1\leq i<j\leq 4,$$
where the six unknowns are $x_2, x_3, x_4, y_2, y_3, y_4$ and we fixed $x_1=y_1 = 0$.
It admits the trivial solution where all variables are zero. By assumption, there is another solution, given
by $(x_i, y_i) = E_i$ for $2\leq  i \leq 4$. Hence, the determinant of the system vanishes. This determinant 
computes to $$Q(m_{1,2}, m_{1,3}, m_{1,4}, m_{2,3}, m_{2,4}, m_{3, 4}),$$
concluding the proof of the lemma.
\end{proof}

We will also use the existence of integer sequences with polynomial growth 
avoiding certain additive configurations.

\begin{defin}[Generalized Sidon Sequences] A strictly increasing sequence $\mathcal{C}$ of positive 
integers is a \emph{$B_h$-sequence} if there is no integer $n\geq 1$ which can be expressed as the sum of 
exactly $h$ (non-necessarily distinct) elements of $C$ in two different ways.
\end{defin}

\begin{lem}
	Let $h\geq 2$ be fixed. There is a strictly increasing $B_h$-sequence $(c_i)_{i\in \mathbb{N}}$
	and an algorithm (the ``classic greedy algorithm'') such that
	\begin{enumerate}[noitemsep, topsep=2pt]
		\item the sequence has polynomial growth, more precisely: $c_n = \mathcal{O}(n^{2h-1})$;
		\item the algorithm computes $c_1, \ldots, c_n$ in polynomial time (with respect to 
		$n$).\footnote{The degree of the polynomial depends on $h$.} 
	\end{enumerate}
	\label{lem:sidon}
\end{lem}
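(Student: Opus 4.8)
The plan is to analyze the \emph{classic greedy algorithm}: set $c_1 = 1$ and, having constructed $c_1 < \cdots < c_j$, let $c_{j+1}$ be the least integer $x > c_j$ for which $\{c_1, \ldots, c_j, x\}$ is still a $B_h$-set. This manifestly produces an infinite strictly increasing sequence of positive integers with the $B_h$-property, so it remains to establish the growth estimate (1) $c_n = \mathcal{O}(n^{2h-1})$ and the running-time bound (2).

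For the growth estimate, the key idea is to count not the per-step jumps but the \emph{total} collection of integers that are ever skipped, expressed through the \emph{final} set $\{c_1, \ldots, c_n\}$. First I would characterize the rejected candidates. If $x > c_j$ is rejected at step $j+1$, then adjoining $x$ creates two distinct size-$h$ multisets $M_1 \neq M_2$ drawn from $\{c_1, \ldots, c_j, x\}$ with equal sums; since $\{c_1, \ldots, c_j\}$ is already $B_h$, at least one of $M_1, M_2$ must contain $x$. Let $a, b$ be the multiplicities of $x$ in $M_1, M_2$ (we may assume $a > b$, as $a=b$ would yield a collision already present in $\{c_1,\ldots,c_j\}$), and let $S_1, S_2$ be the sums of the remaining entries; equality of sums reads $(a-b)\,x = S_2 - S_1$. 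As every entry other than $x$ is at most $c_j < x$, we get $(a-b)x = S_2 - S_1 \leq S_2 \leq (h-b)c_j < (h-b)x$, forcing $a \leq h-1$ and hence $d \coloneqq a-b \in \{1, \ldots, h-1\}$. Thus every skipped integer has the form $x = (S_2 - S_1)/d$, where $S_1, S_2$ are sums of multisets of sizes $h-a$ and $h-b$ from $\{c_1, \ldots, c_n\}$. Observing that $[1, c_n]$ is covered by the $n$ chosen values together with all skipped integers, and that each skipped integer is determined by $d$ and a pair of multisets of total size $(h-a)+(h-b) = 2h-(a+b) \leq 2h-1$ from an $n$-element set, the number of possibilities is $\mathcal{O}(n^{2h-1})$ (the extremal case being $a=1$, $b=0$). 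Hence $c_n \leq n + \mathcal{O}(n^{2h-1}) = \mathcal{O}(n^{2h-1})$. It is precisely this cumulative count—via the final indices rather than summing the worst-case increment $c_{j+1}-c_j$—that avoids a spurious extra factor of $n$.

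For part (2), I would maintain the set $\Sigma_j$ of all $h$-fold sums of $\{c_1, \ldots, c_j\}$ in a hash table (or balanced search tree); it has $|\Sigma_j| = \binom{j+h-1}{h} = \mathcal{O}(j^h)$ entries. To test a candidate $x$, one generates the $\binom{j+h-1}{h-1} = \mathcal{O}(j^{h-1})$ sums that involve $x$ at least once and checks each against $\Sigma_j$ and against the other new sums; accepting $x$ then costs one batch of insertions. By part (1) every candidate examined during the entire run is an integer at most $c_n = \mathcal{O}(n^{2h-1})$, and the scanning ranges of distinct steps are disjoint, so at most $\mathcal{O}(n^{2h-1})$ candidates are tested overall, each in time polynomial in $j \leq n$. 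Since all integers involved are bounded by $\mathcal{O}(n^{2h-1})$ and hence have $\mathcal{O}(\log n)$ bits, every arithmetic operation is cheap in the logarithmic-cost model, and the total running time is polynomial in $n$ for fixed $h$. The output sequence consists of positive integers by construction, as required.

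The main obstacle, I expect, is the combinatorial bookkeeping in the rejected-candidate analysis: correctly splitting on the multiplicities $a > b$ of the new point, justifying $a \leq h-1$ so that $d \leq h-1$, and—above all—recognizing that one should bound the \emph{total} number of skipped integers in $[1,c_n]$ through the final set, rather than iterating a per-step estimate (which would only give $\mathcal{O}(n^{2h})$). Once this is in place, the running-time part is routine, relying only on fixing $h$ so that $\binom{j+h-1}{h}$ remains polynomial in $j$.
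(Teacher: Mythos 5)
Your proposal is correct, but it is worth noting that the paper itself contains no proof of this lemma: it simply defers to the introduction of~\cite{Cilleruelo}, where the classic greedy analysis is carried out (in the more general setting of $B_h[g]$ sequences). Your write-up is thus a self-contained version of precisely the argument the paper outsources, and it handles the two genuinely delicate points correctly. First, the characterization of rejected candidates: writing the collision as $(a-b)x = S_2 - S_1$, reducing to $a > b$ (an equal-multiplicity collision descends, after padding, to a collision already inside $\{c_1,\ldots,c_j\}$), and deducing $a \leq h-1$ from $S_2 \leq (h-b)c_j < (h-b)x$, so that $d = a-b \in \{1,\ldots,h-1\}$. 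Second, and crucially, the \emph{cumulative} count: every integer in $[1, c_n]$ that is not chosen was rejected at some step against multisets drawn from a subset of the \emph{final} set $\{c_1,\ldots,c_n\}$, so the skipped integers number at most $\mathcal{O}(n^{2h-1})$ (pairs of multisets of total size at most $2h-1$, times the constantly many choices of $d$); as you observe, summing a per-step bound instead would only give $\mathcal{O}(n^{2h})$, which would not match the exponent claimed in the lemma. The running-time argument is also sound: the scanned ranges are disjoint, so at most $c_n = \mathcal{O}(n^{2h-1})$ candidates are ever tested, each test is polynomial in $n$ for fixed $h$, and all integers involved have $\mathcal{O}(\log n)$ bits, so operations are cheap in the logarithmic-cost RAM model the paper works in. In short: the paper's citation buys brevity and generality; your proof buys self-containment and makes explicit where the exponent $2h-1$ comes from.
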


For a treatment of a more general case, we refer the reader to~\cite{Cilleruelo} (the proof of \cref{lem:sidon} 
can be found in the introduction). \Cref{lem:example} is the last step before the proof 
of \cref{lem:existence}.

\begin{lem}
\label{lem:example}
The set $A\coloneqq \{(50^{c_i}, 50^{2c_i})\mid i\geq 1\}$ is a slope-generic set for any $B_3$-sequence $
\mathcal{C} = (c_i)_{i\geq 1}$.
\end{lem}

Let us make some comments before starting the proof. In the proof of \cref{thm:main}, we needed 
to be able to exhibit slope-generic sets of arbitrary size. The intuition given in \cref{sec:motivation} 
was that most sets are slope-generic. However, to give an explicit example, one must verify the condition 
in \cref{def:slopegeneric} for a concrete set $A$. This is not an easy task: we have to check the 
condition ($E\emb A$ or $E^* \emb A$) for every possible simple list $E$ of four points 
with $\slopes(E) \subseteq \slopes(A)$. For a finite set $A$, this is a finite computation (considering $E$
modulo $\mathbb{H}$). Here, we give a family of infinite slope-generic 
sets. The points of $A$ are chosen on the parabola $y=x^2$ in order for both the slopes and the points 
to be integers with very simple expressions (the same idea was used in~\cite[Theorem~8.3]{JamisonSurvey}).  

\begin{proof}[Proof of \cref{lem:example}]
Let $A_i = 
(50^{c_i}, 50^{2c_i})$ for $i\in \mathbb{N}$. As $\slopes(A_iA_j) = 50^{c_i}+50^{c_j}$ for $i\neq j$, it is clear 
that $A$ has distinct slopes. 

Suppose that $E\coloneqq\{E_1, \ldots, E_4\}$ is a simple set of four points with $\slopes(E)\subseteq 
\slopes(A)$. We have to show that $E\emb A$ or $E^*\emb A$. We let $m_{i,j} \coloneqq \slopes(E_iE_j) $ 
for $i < j$. As $m_{1,2}\in\slopes(A)$, there are two integers $x_1\neq x_2$ in the sequence $\mathcal{C}$ such 
that $m_{1, 2} = 50^{x_1}+50^{x_2}$. Similarly, $m_{1, 3} = 50^{x_3}+50^{x_4}$, and so on, until $m_{3, 4} = 
50^{x_{11}}+50^{x_{12}}$, for some elements of $\mathcal{C}$ with $x_{2i-1}\neq x_{2i}$ (see \cref{fig:schematicE}). 

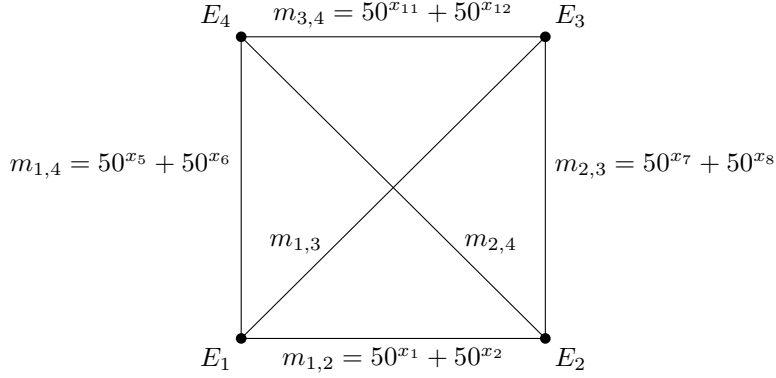
\begin{figure}[H]
\centering
\begin{tikzpicture}[scale = 2]
	\node (0) at (-1, -1) {};
	\node (1) at (1, -1) {};
	\node (2) at (1, 1) {};
	\node (3) at (-1, 1) {};
	
	\fill (0) circle (1pt);
	\fill (1) circle (1pt);
	\fill (2) circle (1pt);
	\fill (3) circle (1pt);
	
	\node[below left] at (0) {$E_1$};
	\node[below right] at (1) {$E_2$};
	\node[above right] at (2) {$E_3$};
	\node[above left] at (3) {$E_4$};
	
	\draw (0.center) -- (1.center) -- (2.center) -- (3.center) -- cycle;
	\draw (0.center) -- (2.center);
	\draw (1.center) -- (3.center);
	
	\node[below] (4) at (0, -1) {$m_{1, 2} = 50^{x_1} + 50^{x_2}$};
	\node[left = 8pt, above] (5) at (-0.5, -0.5) {$m_{1, 3}$};
	\node[above left] (6) at (-1, 0) {$m_{1, 4} =50^{x_5} + 50^{x_6}$};
	\node[above right] (7) at (1, 0) {$m_{2, 3} =  50^{x_7} + 50^{x_8}$};
	\node[right = 8pt, above] (8) at (0.5, -0.5) {$m_{2, 4} $};
	\node[above] (9) at (0, 1) {$m_{3, 4} = 50^{x_{11}} + 50^{x_{12}}$};
\end{tikzpicture}
\caption{Schematic representation of the set $E$ with its six slopes $m_{i,j}$.}
\label{fig:schematicE}
\end{figure}

Since $E$ is simple, two lines $E_iE_{j_1}$ and $E_iE_{j_2}$ passing through the same point $E_i$ cannot 
have the same slope if $j_1\neq j_2$. For example, this tells us that $m_{1, 2} \neq m_{2, 3}$, 
i.e.~$50^{x_1} + 50^{x_2} \neq 50^{x_7} + 50^{x_8}$, which is equivalent to $\{x_1,x_2 \} \neq  \{x_7, x_8\}$.

Putting all the constraints together, we have\footnote{The case $j=7-i$ corresponds to two 
lines $E_iE_j$ and $E_kE_l$ with $\{i,j\}\cap \{k,l\} = \emptyset$, which could be parallel a 
priori. For instance, we do not know (yet) that $m_{1, 2} \neq m_{3, 4}$.}
\begin{equation}
	\begin{cases}
		\text{For }1\leq i\leq 6, x_{2i-1}\neq x_{2i};\\
		\text{For }1\leq i,j\leq 6\text{ and }j\not\in \{i,7-i \},\text{ the sets }\{x_{2i-1}, x_{2i}\}\text{ 
		and } \{x_{2j-1}, x_{2j}\} \text{ are distinct.}
	\end{cases}
\label{eq:distinct}
\end{equation}

By \cref{lem:polynomial} with the slopes $m_{i,j}$, we know that
\begin{equation} Q(50^{x_1}+50^{x_2}, \ldots, 50^{x_{11}}+50^{x_{12}}) = 0.
\label{eq:1}
\end{equation}
Unless many $x_i$'s are actually equal, an equality as \labelcref{eq:1} is unlikely to hold.
The reason is that the polynomial $Q$ (of relatively small degree and coefficients) cannot vanish when 
evaluated at integers of completely different orders of magnitude.

The goal is to use the constraints \labelcref{eq:distinct} and \labelcref{eq:1} to prove that the $x_i$'s 
can take only four distinct values and to know for which indices $i,j$ we have $x_i = x_j$.\footnote{We can 
compare this with \cref{fig:motivationCase2} from \cref{sec:motivation}. If the integers $x_i$ were to 
take more than four different values, it would mean that $\slopes(E)$ (which corresponds 
to $T$ on  \cref{fig:motivationCase2}) is not the slope set of a subset $B$ of four points 
of $A$.} We will see that there are exactly two possibilities, depicted in \cref{fig:solutionE} (here, 
$y_1, \ldots, y_4$ are distinct integers and each $x_i$ is equal to one of the $y_j$'s). 

\begin{figure}[H]
\centering
\begin{tikzpicture}
\begin{scope}[scale = 1.8]
	\node (0) at (-1, -1) {};
	\node (1) at (1, -1) {};
	\node (2) at (1, 1) {};
	\node (3) at (-1, 1) {};
	
	\fill (0) circle (1pt);
	\fill (1) circle (1pt);
	\fill (2) circle (1pt);
	\fill (3) circle (1pt);
	
	\node[below left] at (0) {$E_1$};
	\node[below right] at (1) {$E_2$};
	\node[above right] at (2) {$E_3$};
	\node[above left] at (3) {$E_4$};
	
	\draw (0.center) -- (1.center) -- (2.center) -- (3.center) -- cycle;
	\draw (0.center) -- (2.center);
	\draw (1.center) -- (3.center);
	
	\node[below] (4) at (0, -1) {$50^{y_1} + 50^{y_2}$};
	\node[rotate = 45, left = 8pt, above] (5) at (-0.35, -0.35) {$50^{y_1} + 50^{y_3}$};
	\node[above left] (6) at (-1, 0) {$50^{y_1} + 50^{y_4}$};
	\node[above right] (7) at (1, 0) {$50^{y_2} + 50^{y_3}$};
	\node[rotate = -45, right = 8pt, above] (8) at (0.35, -0.35) {$50^{y_2} + 50^{y_4}$};
	\node[above] (9) at (0, 1) {$50^{y_{3}} + 50^{y_4}$};
\end{scope}

\begin{scope}[scale = 1.8, xshift = 4cm]
	\node (0) at (-1, -1) {};
	\node (1) at (1, -1) {};
	\node (2) at (1, 1) {};
	\node (3) at (-1, 1) {};
	
	\fill (0) circle (1pt);
	\fill (1) circle (1pt);
	\fill (2) circle (1pt);
	\fill (3) circle (1pt);
	
	\node[below left] at (0) {$E_1$};
	\node[below right] at (1) {$E_2$};
	\node[above right] at (2) {$E_3$};
	\node[above left] at (3) {$E_4$};
	
	\draw (0.center) -- (1.center) -- (2.center) -- (3.center) -- cycle;
	\draw (0.center) -- (2.center);
	\draw (1.center) -- (3.center);
	
	\node[below] (4) at (0, -1) {$50^{y_3} + 50^{y_4}$};
	\node[rotate = 45, left = 8pt, above] (5) at (-0.35, -0.35) {$50^{y_2} + 50^{y_4}$};
	\node[below left] (6) at (-1, 0) {$50^{y_2} + 50^{y_3}$};
	\node[below right] (7) at (1, 0) {$50^{y_1} + 50^{y_4}$};
	\node[rotate = -45, right = 8pt, above] (8) at (0.35, -0.35) {$50^{y_1} + 50^{y_3}$};
	\node[above] (9) at (0, 1) {$50^{y_{1}} + 50^{y_2}$};
\end{scope}
\end{tikzpicture}

\caption{The two possibilities for the slopes of $E$, given the constraints \labelcref{eq:distinct} and \labelcref{eq:1}.}
\label{fig:solutionE}
\end{figure}
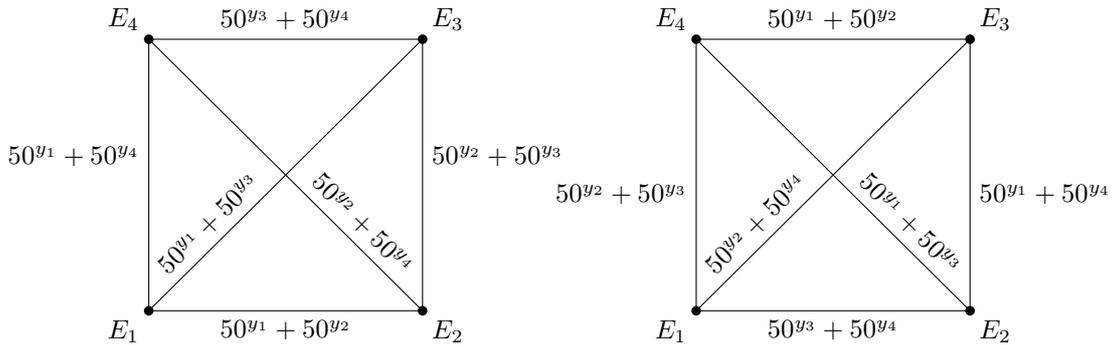

Let $r = |\{x_1, \ldots, x_{12}\}|$ be the number of distinct $x_i$'s. We will show that \labelcref{eq:1} does not 
only hold as an equality between integers, but also holds ``formally'' (or ``symbolically''). To state this 
precisely, we use the language of polynomial rings. Let $\mathbb{Z}[T_1,\ldots,T_r]$ be the 
polynomial ring in $r$ indeterminates. We choose a bijection $\mathcal{J}:\{x_1, \ldots, x_{12}\} \to \{T_1, 
\ldots, T_r\}$.

\begin{claim*}$Q\big(\mathcal{J}(x_1)+\mathcal{J}(x_2), \ldots, 
\mathcal{J}(x_{11})+\mathcal{J}(x_{12})\big)$ is the zero polynomial in $\mathbb{Z}[T_1,\ldots,T_r]$.
\end{claim*}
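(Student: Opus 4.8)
The plan is to show that the polynomial identity \labelcref{eq:1} holds symbolically, i.e.~after replacing each integer $x_i$ by the corresponding indeterminate $\mathcal{J}(x_i)$, the polynomial $Q\big(\mathcal{J}(x_1)+\mathcal{J}(x_2), \ldots, \mathcal{J}(x_{11})+\mathcal{J}(x_{12})\big)$ vanishes identically. The key idea is that evaluating the indeterminates $T_1, \ldots, T_r$ at the specific values $50^{(\,\cdot\,)}$ recovers exactly \labelcref{eq:1}, but the orders of magnitude $50^{x_i}$ are so spread out (controlled by the $B_3$-property of $\mathcal{C}$) that the single integer equation in fact forces \emph{every} monomial coefficient to cancel. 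In effect, I want to argue that no cancellation between distinct monomials of the expanded polynomial can occur when the $T_j$ are specialized to these widely-separated powers of $50$.

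Concretely, I would first expand $Q\big(\mathcal{J}(x_1)+\mathcal{J}(x_2), \ldots\big)$ into a $\mathbb{Z}$-linear combination of monomials in $T_1, \ldots, T_r$. From the definition $Q(z_1, \ldots, z_6) = (z_3-z_5)(z_6-z_2)(z_4-z_1)-(z_2-z_4)(z_5-z_1)(z_6-z_3)$, each $z_\ell$ becomes a sum $\mathcal{J}(x_{2\ell-1}) + \mathcal{J}(x_{2\ell})$ of two indeterminates, so $Q$ expands into a sum of terms, each a product of three indeterminates (a degree-$3$ monomial) with coefficient $\pm 1$. The next step is to bound the coefficients and the degree: every monomial appearing has total degree $3$, and its integer coefficient is bounded in absolute value by a small constant (at most a handful). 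When we specialize $T_j = 50^{(\text{value})}$, each monomial $T_aT_bT_c$ becomes $50^{(\text{sum of three exponents})}$, and the exponents range over sums of three elements of the sequence $\mathcal{C}$.

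The heart of the argument, and the main obstacle, is controlling which distinct monomials can specialize to the \emph{same} power of $50$. This is exactly where the $B_3$-property enters: if two monomials $T_aT_bT_c$ and $T_{a'}T_{b'}T_{c'}$ specialize to the same exponent, then three elements of $\mathcal{C}$ sum to the same integer as three other elements, so the $B_3$-condition forces the multisets of exponents (hence the multisets $\{a,b,c\}$ and $\{a',b',c'\}$) to coincide. Thus monomials that are formally distinct specialize to distinct powers of $50$, whose coefficients therefore cannot interfere. Because $50$ is chosen large enough to dominate the small integer coefficients of $Q$, a nonzero formal coefficient would produce a nonzero contribution of size at least $50^{(\text{some exponent})}$ that no other term can cancel, contradicting \labelcref{eq:1}. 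Hence every coefficient must already be zero, proving the claim. The delicate point to get right is that, after grouping the monomials of the expanded $Q$ by their specialized exponent, the integer coefficient of each group is both genuinely small (so that $50$ outpaces it) and, by the $B_3$-property, arises from a single formal monomial up to permutation of its factors — so that vanishing of the specialized sum is equivalent to vanishing of each formal coefficient.
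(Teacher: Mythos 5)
Your proposal follows, in essence, the same route as the paper, merely repackaged: both arguments combine the uniqueness of base-$50$ representations with the $B_3$-property of $\mathcal{C}$. The paper expands \labelcref{eq:1} into the two-sided form \labelcref{eq:2}, with $48$ powers of $50$ on each side, extracts a matching $\sigma\in\mathfrak{S}_{48}$ of equal exponents (possible because each side has fewer than fifty terms), refines it to a matching of the individual summands via the $B_3$-property, and transports the resulting equalities through $\mathcal{J}$. You instead expand the formal polynomial first, specialize each $T_j$ to the corresponding power of $50$, and argue that distinct monomials land on distinct powers of $50$, which cannot interfere when the coefficients are small. Your $B_3$-based separation of monomials is correct and is exactly parallel to the paper's passage from \labelcref{eq:3} to \labelcref{eq:4}.

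The genuine gap is the coefficient bound --- the very point you flag as ``delicate'' but do not resolve. Your final step needs, for every monomial $m$ of the expansion in $\mathbb{Z}[T_1,\ldots,T_r]$ with coefficient $c_m$, the precise bound $\lvert c_m\rvert \leq 49$: only then does a nonzero coefficient on the largest specialized exponent $e$ dominate the worst-case tail, since $\sum_{j<e} 49\cdot 50^{j} = 50^{e}-1 < 50^{e}$. ``At most a handful'' is not a proof here, and the obvious estimate is insufficient: each of the two triple products in $Q$ expands into $4^3 = 64$ monomials with coefficients $\pm 1$ ($32$ of each sign), so after the substitution $X_i \mapsto \mathcal{J}(x_i)$ merges monomials, the a priori bound is only $\lvert c_m\rvert \leq 64$; and since $64\cdot\frac{50^{e}-1}{49} > 50^{e}$, a top coefficient of that size could conceivably be cancelled by lower-order terms, so your concluding step breaks down. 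What rescues the argument is a finite computation: the two products of $Q$ share $16$ monomials, which cancel in the difference already in $\mathbb{Z}[X_1,\ldots,X_{12}]$, leaving exactly $48$ monomials of each sign and hence $\lvert c_m\rvert \leq 48 < 50$. This counting is precisely the origin of the ``$48$'' in the paper's \labelcref{eq:2}, and the reason the base $50$ works at all; without it, your bound is not merely imprecise but too weak to support the conclusion.
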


Essentially, the claim means that, if we replace each $x_i$ by a formal indeterminate, consistently (meaning 
that, if $x_i = x_j$, we replace $x_i$ and $x_j$ by the same indeterminate), \labelcref{eq:1} still holds. To 
be precise, it is $50^{x_i}$ that we replace with some indeterminate in $\{T_1, \ldots, T_r\}$.

\begin{proof}[Proof of claim]
If we expand the left-hand side of \labelcref{eq:1}, simplify, and move half of the terms to the right-hand side, 
we get an equation of the form
\begin{equation}  
	\sum_{i=1}^{48}50^{x_{L_1(i)}+x_{L_2(i)}+x_{L_3(i)}} = \sum_{i=1}^{48}50^{x_{R_1(i)}+x_{R_2(i)}
	+x_{R_3(i)}} 
	\label{eq:2}
\end{equation}
for some known maps $L_k, R_k : \{1, 2, \ldots, 48\} \to \{1, 2, \ldots, 12\}$, $k= 1, 2, 3$ (these maps are just 
obtained by replacing the polynomial $Q$ by its exact expression, given in \cref{lem:polynomial}).
Since the exponents are natural numbers and each sum contains less that fifty terms, equation \labelcref{eq:2}
is equivalent to
\begin{equation}
	\exists \sigma\in \mathfrak{S}_{48},\, \,  \forall  i \in [\![1, 48]\!], \quad x_{L_1(i)}+x_{L_2(i)}+x_{L_3(i)}= 
	x_{R_1(\sigma (i))}+x_{R_2(\sigma (i))}+x_{R_3(\sigma (i))}.
	\label{eq:3}
\end{equation} 
Using the fact that $\{x_1, \ldots, x_{12}\}$ is part of the $B_3$-sequence $\mathcal{C}$, we can 
rewrite \labelcref{eq:3} as
\begin{equation}
	\exists \sigma \in \mathfrak{S}_{48}, \, \, \forall  i \in [\![1, 48]\!], \,\, \exists \tau\in \mathfrak{S}_3, \,\, 
	\forall k \in [\![1, 3]\!], \quad x_{L_k(i)} = x_{R_{\tau(k)}(\sigma(i))}.
	\label{eq:4}
\end{equation} 
	 
To shorten notation, we write $X_i$ instead of $\mathcal{J}(x_i)$.\footnote{Note that $X_i$ is not a
new indeterminate, it is just a notation for the indeterminate $T_j$ which is associated to~$x_i$.} Thus, each 
$X_i$ is an indeterminate and, since $\mathcal{J}$ is a bijection, $X_i = X_j$ if and only if $x_i = x_j$. 
In particular, \labelcref{eq:distinct} becomes
\begin{equation}
	\begin{cases}
		\text{For }1\leq i\leq 6, X_{2i-1}\neq X_{2i};\\
		\text{For all $i,j$ with $j\not\in \{i,7-i \}$, the sets }\{X_{2i-1}, X_{2i}\}\text{ and } \{X_{2j-1}, 
		X_{2j}\} \text{ are distinct.}
	\end{cases}
	\tag{\ref*{eq:distinct}'} \label{eq:distinctX}
\end{equation}
Applying the bijection, \labelcref{eq:4} translates to 
\begin{equation*}
	\exists \sigma \in \mathfrak{S}_{48}, \, \, \forall  i \in [\![1, 48]\!], \,\, \exists \tau\in \mathfrak{S}_3, \,\, 
	\forall k \in [\![1, 3]\!], \quad X_{L_k(i)} = X_{R_{\tau(k)}(\sigma(i))}.
	\tag{\ref*{eq:4}'}
\end{equation*} 
Just like above, this equation can be rewritten as
\begin{equation*}
	\exists \sigma\in \mathfrak{S}_{48},\, \,  \forall  i \in [\![1, 48]\!], \quad X_{L_1(i)}X_{L_2(i)}X_{L_3(i)}= 
	X_{R_1(\sigma (i))}X_{R_2(\sigma (i))}X_{R_3(\sigma (i))},
\end{equation*} 
then as
\begin{equation*}  
	\sum_{i=1}^{48}{X_{L_1(i)}X_{L_2(i)}X_{L_3(i)}} = \sum_{i=1}^{48}{X_{R_1(i)}X_{R_2(i)}X_{R_3(i)}},
\end{equation*}
and finally (by definition of the functions $L_k, R_k$) as
\begin{equation*} 
	Q(X_1+X_2, \ldots, X_{11}+X_{12}) = 0,
	\tag{\ref*{eq:1}'} \label{eq:1X}
\end{equation*}
which proves the claim.
\end{proof}

We will now see what the constraints \labelcref{eq:distinctX} and \labelcref{eq:1X} imply about the $X_i$'s. We will
use the fact that \labelcref{eq:distinctX} and \labelcref{eq:1X} are (in)equalities in the more convenient 
ring $\mathbb{Z}[T_1, \ldots, T_{r}]$. 
By definition of $Q$ (cf. \cref{lem:polynomial}), the equality \labelcref{eq:1X} is equivalent to
\begin{equation} 
	(Z_3-Z_5)(Z_6-Z_2)(Z_4-Z_1)=(Z_2-Z_4)(Z_5-Z_1)(Z_6-Z_3),
	\label{eq:stronger}
\end{equation}
where we used the notations $(Z_1, Z_2, \ldots, Z_6 )\coloneqq (X_{1}+X_{2}, X_{3}+X_{4}, \ldots,  X_{11}+X_{12})$ for 
the equation to fit on a single line. 


Let us look more closely at the factors of \labelcref{eq:stronger}. They are all of the form $Z_i - Z_j$ for some pairs $(i,j)$
with $j\not\in \{i, 7-i\}$. As $Z_i$ is just an abbreviation for $X_{2i-1} + X_{2i}$, we have 
$$Z_i - Z_j = X_{2i-1} + X_{2i} - X_{2j-1} - X_{2j}.$$
Since $j\not\in \{ i, 7-i\}$, we know that $X_{2i-1} \neq X_{2i}$, $X_{2j-1}\neq X_{2j}$ 
and ${\{X_{2i-1}, X_{2i}\}\neq \{X_{2j-1}, X_{2j}\}}$, by~\labelcref{eq:distinctX}. 
Remembering that each of $X_{2i-1}, X_{2i}, X_{2j-1}, X_{2j}$ 
is an element of $\{T_1, \ldots, T_r\}$, we observe that $Z_i - Z_j$ must be a (non-zero) homogeneous 
polynomial with content $1$ and (total) degree $1$.\footnote{The content of a polynomial over $\mathbb{Z}$ 
is the greatest common divisor of its coefficients.} Thus, for these pairs $(i,j)$, the polynomial $Z_i-Z_j $ 
is irreducible, hence prime, in the unique factorization domain $\mathbb{Z}[T_1, \ldots, T_r]$. 
We have just proved that each side of 
\labelcref{eq:stronger} 
is a product of prime elements. By unique factorisation in $\mathbb{Z}[T_1, \ldots, T_r]$,
we can thus identify each factor on one side with one of the factors on the other side (up to a sign).

From the single equation \labelcref{eq:stronger}, we thus deduce three equations (for every choice of signs
and every permutation of the factors of the right-hand side). For example, if we choose all the signs to be $+$ 
and we don't permute the factors, the three equations are
\begin{equation}
\begin{cases}
	Z_3 - Z_5 = Z_2 - Z_4\\
	Z_6 - Z_2 = Z_5 - Z_1\\
	Z_4 - Z_1 = Z_6 - Z_3.
\end{cases}
\label{eq:solvableCaseZ}
\end{equation}
Let us continue this example, and come back to the general case afterwards. After replacing the $Z_i$'s by 
their definition and noticing that the last equation in \labelcref{eq:solvableCaseZ} is redundant, we get
\begin{equation}
	X_1+X_2+X_{11}+X_{12} = X_3+X_4+X_9+X_{10} = X_5+X_6+X_7+X_8.
	\label{eq:solvableCaseX}
\end{equation}
Since every $X_i$ is an indeterminate $T_j$ in $\mathbb{Z}[T_1, \ldots, T_r]$, the only 
way \labelcref{eq:solvableCaseX} can hold is if $X_3, X_4, X_9, X_{10}$ and $X_5, X_6, X_7, X_8$ are 
permutations of $X_1,X_2,X_{11},X_{12}$. Not every pair of permutations is allowed: the 
constraints \labelcref{eq:distinctX} must still be satisfied. Two ways that \labelcref{eq:distinctX} 
and \labelcref{eq:solvableCaseX} can simultaneously be verified are as follows.
\begin{equation}\text{(i)}\,\, \begin{cases}
	X_1=X_3=X_5\\
	X_2=X_9=X_7 \\
	X_{11}=X_4=X_8\\
	X_{12}=X_{10}=X_{6}\\
	X_1, X_2, X_{11}, X_{12} \text{ pairwise distinct}
\end{cases}\quad \text{(ii)}\,\,
\begin{cases}
	X_{11}=X_9=X_{7}\\
	X_{12}=X_3=X_{5} \\
	X_1=X_{10}=X_{6}\\
	X_2=X_4=X_8\\
	X_{11}, X_{12}, X_1, X_2 \text{ pairwise distinct}
\end{cases}
\label{eq:solvableCaseSolved}
\end{equation}

We now return to the general case. We use a computer program to check all the possibilities (the source code
can be found in \cref{appendix:program}). 
With the help of the program, we conclude the following.
\begin{enumerate}[itemsep=2pt, topsep=3pt]
	\item There is only one way to choose three signs and a permutation of the factors that does 
	not lead to a contradiction (it is the choice made in \labelcref{eq:solvableCaseZ}).
	\item There are $2^7$ possibilities in total. Up to the symmetries 
	$X_{2i-1} \leftrightarrow X_{2i}$ (there are $2^6$ combinations of such transpositions), 
	there are actually only two possibilities, given by (\textcolor{cyan}{i}) and (\textcolor{cyan}{ii}) from \labelcref{eq:solvableCaseSolved}.
\end{enumerate}

We can now prove that $A$ is slope-generic. Recall that $X_i = X_j$ if and only if $x_i = x_j$. Without loss of 
generality (performing exchanges $X_{2i-1} \leftrightarrow X_{2i}$ and 
$x_{2i-1} \leftrightarrow x_{2i}$ if necessary), we may thus assume to be in 
one of the following two cases.
$$\text{(i)}\,\, \begin{cases}
	y_1\coloneqq x_1=x_3=x_5\\
	y_2\coloneqq x_2=x_9=x_7 \\
	y_3\coloneqq x_{11}=x_4=x_8\\
	y_4\coloneqq x_{12}=x_{10}=x_{6}\\
	y_1, y_2, y_3, y_4 \text{ pairwise distinct}
\end{cases}\quad \text{(ii)}\,\,
\begin{cases}
	y_1\coloneqq x_{11}=x_9=x_{7}\\
	y_2\coloneqq x_{12}=x_3=x_{5} \\
	y_3\coloneqq x_1=x_{10}=x_{6}\\
	y_4\coloneqq x_2=x_4=x_8\\
	y_1, y_2, y_3, y_4 \text{ pairwise distinct}
\end{cases}$$	

In each case, we define the distinct points $B_i \coloneqq (50^{y_i}, 50^{2y_i})$, for $1\leq i\leq 4$. The 
subset $B=\{B_1, B_2, B_3, B_4\}$ of $A$ is our candidate to verify the slope-genericity of $A$ 
with respect to $E$. 
\begin{itemize}[itemsep=2pt, topsep=4pt]
	\item In the case (\textcolor{cyan}{i}), we have $E\sim B$. To see this, note that the quadruples $(E_1, E_2, E_3, E_4)$ 
	and $(B_1, B_2, B_3, B_4)$ are exactly in the configuration of \cref{rem:sim}. This is an 
	immediate verification: for example, one has $\slopes(E_2E_4) = 50^{x_9}+50^{x_{10}} = 
	50^{y_2}+50^{y_4} = \slopes(B_2B_4)$. Hence, $E\emb A$.
	
	\item In the case (\textcolor{cyan}{ii}), write $E^* = (F_1, \ldots, F_4)$. This time, $(F_1, F_2, F_3, F_4)$ and $(B_1, B_2, 
	B_3, B_4)$ are in the configuration of \cref{rem:sim}, so $E^*\sim B$ and $E^*\emb A$. 	
	\qedhere
\end{itemize}
\end{proof}

\begin{remark} 
	By exploiting symmetry, it is possible to work out all the cases by hand, instead of 
	using a computer program. 
\end{remark}
\begin{remark}\label{rem:random}
	In the previous subsections, we said that one could think of a slope-generic set as any 
	``sufficiently random'' set of points. We now give a more concrete explanation of this 
	intuition, in the light of the proof 
	of \cref{lem:example} (supposing that we
	do not require the slopes to be integers anymore). Let $A$ be a set of points, and let $E$ be a simple set 
	of four points such that $\slopes(E) = \slopes(A)$. Thus, every slope $m_{i,j}$ of $E$ can be written as
	$m_{i,j} = \frac{y_{i,j,2}- y_{i,j,1}}{x_{i,j,2} - x_{i,j,1}}$ for some points $(x_{i,j,k}, y_{i,j,k})$ of $A$.
	By \cref{lem:polynomial}, we have 
	\begin{equation}
	Q\left(\frac{y_{1,1,2}- y_{1,1,1}}{x_{1,1,2} - x_{1,1,1}},  
	\frac{y_{1,2,2}- y_{1,2,1}}{x_{1,2,2} - x_{1,2,1}}, \ldots, \frac{y_{3,4,2}- y_{3, 4,1}}{x_{3, 4,2} - x_{3, 4,1}}\right) = 0.
	\label{eq:intuition}
	\end{equation}
	If $A$ is ``random enough'', equation \labelcref{eq:intuition} cannot hold by an ``arithmetic coincidence''.
	Following the proof of \cref{lem:example}, this means that \labelcref{eq:intuition} still holds true when we
	replace the $x_{i,j,k}$'s and $y_{i,j,k}$'s by formal variables, in a ``consistent'' way. Once we have the formal
	equality, it is conceivable that the same type of arguments as in the second half of the proof 
	can yield the desired result, i.e.~that $E \emb A$ or $E^* \emb A$. Since this is not needed for our main theorem,
	we will not give more details.
\end{remark}

We can now prove \cref{lem:existence}.

\existence*

\begin{proof}
	 First, a $B_3$-sequence $c_1< \ldots< c_n$ is calculated in polynomial time using the greedy 
	 algorithm of \cref{lem:sidon}. Then, one computes $A_i = (50^{c_i}, 50^{2c_i})$ for $1\leq i\leq 
	 n$, and returns $\{A_1, \ldots, A_n\}$. Because $c_i = \mathcal{O}(n^{2\cdot 3-1})$, the computation 
	 of the powers $50^{c_i}$ and $ 50^{2c_i}$ is indeed polynomial in $n$. \qedhere
\end{proof}

\begin{remark}
	The property that 
	 the sequence $(c_i)$ grows polynomially is crucial in the logarithmic-cost model of computation. In 
	 the uniform-cost model, we could just have chosen the $B_3$-sequence $c_i = 4^i$. The number 	
	 $50^{4^n}$ can 
	 be computed in linear time by repeated squaring, even though this number has exponentially many 
	 digits. This is the reason why the uniform-cost RAM model (with multiplication) is not considered to 
	 be a reasonable model of computation (see~\cite[pp. 177-178]{Hromkovic} 
	 and~\cite[\textsection2.2.2]{Boas}).
\end{remark}

\section{Algorithms for the restricted \SCGD problem}

\label{sec:algo}

In this section, we present two polynomial algorithms for the \SCGD problem when the number of slopes 
$n = |S|$ is not much larger than $k$. The first one (\cref{prop:exact}) is deterministic, the 
other one (\cref{prop:probabilistic}) is probabilistic.

\subsection{Affinely-regular polygons}

We give two equivalent definitions and some elementary properties of affinely-regular polygons that can be 
found in~\cite{Coxeter}. For a survey of affinely-regular polygons over an arbitrary field, we refer the reader 
to~\cite{FisherJamison}.

\begin{defin} \label{def:affreg}
Let $n\geq 3$. An affinely-regular $n$-gon is a finite set of points $P$ 
satisfying one of the following equivalent properties:
\begin{itemize}[noitemsep, topsep=2pt]
	\item $P$ is the image of a regular $n$-gon under some $\psi\in \mathrm{Aff}(2, \mathbb{R})$;
	
	\item $P = \{\phi^i(P_0)\mid i\in \mathbb{Z}\}$ for some $\phi\in \mathrm{Aff}(2, \mathbb{R})$ of 
	order $n$ and some $P_0\in \mathbb{R}^2$.
\end{itemize}
\end{defin}

\begin{fact} \label{fact:affreg} 
Let $P$ be an affinely-regular polygon with vertices $P_0, \ldots, P_{n-1}$, in cyclic order (say 
counterclockwise). Let $\phi$ be the unique affine transformation such that $\phi(P_i)=P_{i+1}$ for 
$i=0,1,2$. Then, considering the indices modulo $n$, we have
\begin{enumerate}[noitemsep, topsep=2pt]
	\item $\phi$ has order $n$ and $\phi^i(P_0) = P_i$ for all $i$;
	\item $\slopes(P_{i-k}P_{j+k}) = \slopes(P_iP_j)$ for all $i,j,k$ with $i\neq j$;
	\item If $s_0 \coloneqq \slopes(P_{n-1}P_1)$ and $s_i \coloneqq \slopes(P_0P_i)$ for $1\leq i\leq n-1$, 
	then the slopes of $P$ are precisely $s_0, s_1,\ldots, s_{n-1}$, in this (cyclic) order;
	\item The sequence of boundary slopes $(\slopes(P_0P_1), \slopes(P_1P_2),\ldots,\slopes(P_{n-1}
	P_0))$ is
	\begin{equation*}
		\begin{cases}	(s_1, s_3, s_5, \ldots, s_{n-2}, s_0, s_2, \ldots, s_{n-1})\text{ if $n$ is odd;}\\
					(s_1, s_3, s_5, \ldots, s_{n-1}, s_1, s_3, \ldots, s_{n-1})\text{ if $n$ is even.}
		\end{cases}
	\end{equation*}
\end{enumerate}
\end{fact}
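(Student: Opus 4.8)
The plan is to establish all four statements of \cref{fact:affreg} by exploiting a single structural fact: an affinely-regular $n$-gon is an orbit of an affine map $\phi$ of order $n$, and $\phi$ maps the line at infinity to itself, acting on slopes. First I would verify statement (1). By \cref{def:affreg}, $P$ is an affine image of a regular $n$-gon, so there is \emph{some} affine map $\rho$ of order $n$ with $\rho^i(P_0) = P_i$ (conjugate the rotation by the affinizing transformation). It remains to check that the specific $\phi$ defined by $\phi(P_i) = P_{i+1}$ for $i = 0,1,2$ coincides with $\rho$. Since an affine transformation of the plane is determined by its action on three non-collinear points (and $P_0, P_1, P_2$ are non-collinear, as $P$ is simple), and $\rho$ also sends $P_0 \mapsto P_1 \mapsto P_2 \mapsto P_3$, we get $\phi = \rho$ from agreement on $P_0, P_1, P_2$. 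Hence $\phi$ has order $n$ and $\phi^i(P_0) = P_i$.

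Next I would prove statement (2). The key observation is that $\phi \in \mathrm{Aff}(2,\mathbb{R})$, so it acts on the line at infinity, i.e.\ on slopes; write $\bar\phi$ for this induced action on slopes. Applying $\phi^k$ to the segment $P_{i-k}P_{j+k}$ does \emph{not} directly help, so instead I would apply a reflection-type symmetry: the relation $\slopes(P_{i-k}P_{j+k}) = \slopes(P_iP_j)$ expresses a symmetry of the slope sequence under the shift $i \mapsto i-k,\ j \mapsto j+k$, which fixes the sum $i+j$. The clean way is to note that $\phi$ itself acts on slopes as a Möbius-type map of the projective line at infinity, but more usefully, there is an affine symmetry of a regular $n$-gon (a reflection) swapping $P_{i-k} \leftrightarrow P_{j+k}$ when $i+j$ is fixed; since reflections of a regular polygon preserve slopes, and affine images preserve parallelism, the identity follows for $P$. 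I would make this precise by exhibiting the reflection fixing the relevant axis and checking it sends the segment $P_iP_j$ to $P_{i-k}P_{j+k}$ (both have midpoint-index $(i+j)/2$), then invoking that affine maps preserve the slope \emph{relation} even if not individual slopes — but a reflection of a regular $n$-gon is itself an isometry, so it genuinely preserves slopes, and conjugating by the affinizing map preserves parallelism throughout.

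For statement (3), I would use (2) to collapse the $\binom{n}{2}$ slopes down to $n$ distinct values. Setting $s_i = \slopes(P_0P_i)$, relation (2) with the substitution shows every slope $\slopes(P_aP_b)$ equals some $s_{b-a}$ or its symmetric partner; carefully, $\slopes(P_aP_b)$ depends (via (2)) only on $a+b \bmod n$, so there are exactly $n$ slope-classes indexed by $a+b$, which I would match to the list $s_0, \ldots, s_{n-1}$ by evaluating representatives. The cyclic ordering claim follows because $\bar\phi$ rotates the points at infinity cyclically (being an order-$n$ affine map, its action on the projective line at infinity is conjugate to a rotation), so the slopes $s_0, s_1, \ldots, s_{n-1}$ appear in cyclic order. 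Finally, statement (4) is a bookkeeping consequence: the boundary slope $\slopes(P_iP_{i+1})$ corresponds to the index-sum $2i+1$, and reading these off modulo $n$ for $i = 0, 1, \ldots, n-1$ produces the stated interleaved sequences — the odd/even case split arising from whether $2i+1$ ranges over all residues mod $n$ (odd $n$) or only the odd residues, repeated (even $n$).

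The main obstacle I expect is statement (2): getting the slope symmetry $\slopes(P_{i-k}P_{j+k}) = \slopes(P_iP_j)$ cleanly. The subtlety is that a general affine map does \emph{not} preserve slopes, only parallelism, so I cannot simply push the identity through $\phi$. The correct route is to prove the identity first for a genuine regular $n$-gon — where it follows from the dihedral symmetry (a reflection), which is an isometry and hence preserves actual slopes — and then transport it to the affine image $P$ by noting that $\slopes(X_1X_2) = \slopes(X_3X_4)$ is equivalent to $X_1X_2 \parallel X_3X_4$, a condition \emph{preserved} by affine maps. So the real care lies in identifying the right dihedral reflection on the regular polygon and confirming it realizes the index transformation $i \mapsto i-k,\ j\mapsto j+k$; once (2) is secured, statements (3) and (4) are essentially combinatorial consequences.
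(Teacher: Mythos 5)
The paper itself gives no proof of \cref{fact:affreg}: it is quoted as a known elementary property with a citation to Coxeter, so your argument has to stand on its own. Your overall route --- verify everything on a genuine regular $n$-gon and transport the conclusions to $P$ using only the fact that affine maps preserve \emph{parallelism} (not slopes) --- is the right one, and your parts (1) and (4) are correct. However, two steps fail as written. First, in part (2), the claim ``reflections of a regular polygon preserve slopes'' is false: a reflection across a line making angle $\theta$ with the horizontal sends a line of direction $\alpha$ to direction $2\theta-\alpha$, so it preserves slopes only for lines parallel or perpendicular to its axis. Moreover, the reflection you exhibit does \emph{not} send the chord $Q_iQ_j$ to the chord $Q_{i-k}Q_{j+k}$ (writing $Q_t$ for the vertices of the regular $n$-gon): the reflection $\sigma_{i+j}$ with $\sigma_{i+j}(Q_t)=Q_{i+j-t}$ maps \emph{each} of these two chords to itself, swapping its endpoints; and in general no element of the dihedral group maps one chord to the other. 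The repair is one step away from what you wrote: since $\sigma_{i+j}$ swaps the endpoints of a chord, its axis is the perpendicular bisector of that chord; hence \emph{both} chords are perpendicular to the same axis and are therefore parallel, and this parallelism survives the affinizing map. (Alternatively, compute directly: $Q_b-Q_a$ is a scalar multiple of $\bigl(-\sin\tfrac{\pi(a+b)}{n},\,\cos\tfrac{\pi(a+b)}{n}\bigr)$, so the direction of $Q_aQ_b$ depends only on $a+b \bmod n$.)

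Second, in part (3), the cyclic-order claim does not follow from ``$\bar\phi$ is conjugate to a rotation.'' The induced action of $\phi$ on the slopes advances the index-sum by two, $\bar\phi(s_m)=s_{m+2}$, because $\phi$ maps the chord $P_aP_b$ to $P_{a+1}P_{b+1}$. For even $n$ this action has order $n/2$ and does not even visit all slopes; and even for odd $n$, knowing that some cyclic-order-preserving bijection acts as $m\mapsto m+2$ does not determine the cyclic order of $(s_0,\ldots,s_{n-1})$: for $n=5$, the cyclic order $s_0,s_2,s_4,s_1,s_3$ is equally consistent with such an action. What does work, and fits your strategy, is again the explicit regular-polygon computation: the chord directions of the regular $n$-gon are the equally spaced angles $\tfrac{\pi}{2}+\tfrac{\pi m}{n}$, $m=0,\ldots,n-1$, which are manifestly distinct and in cyclic order on the circle of directions; an affine map induces a homeomorphism of the line at infinity, hence preserves or reverses this cyclic order, and choosing the affinizing map to be orientation-preserving (possible since the regular polygon has orientation-reversing symmetries) gives exactly the stated order. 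With (2) and (3) repaired in this way, your index-sum bookkeeping for part (4) goes through unchanged.
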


We also recall Jamison's conjecture on affinely-regular polygons.

\conjecture*

\subsection{Model of computation}

We want to give an algorithm for the \SCGD problem when Jamison's conjecture applies, i.e.~when $n\leq 
2k-c_1$. Assuming the conjecture, the sets $K$ that satisfy the \SCGD problem are subsets of 
affinely-regular polygons. However, affinely-regular $n$-gons have irrational slopes as soon as $n\neq 3,4,6$ 
(because $\cos(2\pi/n)$ has degree $\phi(n)/2$ over $\mathbb{Q}$, as was proven by 
D. H. Lehmer~\cite{Lehmer}). The problem is thus trivial if the slopes given as inputs are integers, 
as in \cref{sec:npcomp}.

We will therefore allow the slopes to be arbitrary real numbers,\footnote{An alternative would be to restrict 
to algebraic numbers, as explained in Yap~\cite{Yap}.} and adopt the \emph{real RAM model} described 
in~\cite{Preparata}: the primitive arithmetic operations $+,-,\cdot,/$ and comparisons on real numbers are 
available at unit time cost.

%
%

\subsection{Deterministic algorithm}

We start by solving a problem similar to the restricted \SCGD problem when four points of the set $K$ are 
already given as inputs. 

\begin{lem}
	There is a deterministic algorithm with time complexity $\mathcal{O}(n)$ for the following problem. 
	\label{lem:fourpoints}
\begin{inputlist}
	\item A sorted list $S$ of $n$ slopes; 
	\item A natural number $k\leq n$;
	\item A simple list $(P_0, \ldots, P_3)$ of four points.
\end{inputlist}
\begin{problemlist}
	\item Does there exist a point set $K$ satisfying the following conditions?
\end{problemlist}	
\begin{enumerate}[nosep, wide, labelindent = 54pt, label=\textcolor{black}{(}\roman*\textcolor{black}{)}] 
	 \item $K$ forms an affinely-regular polygon; \label{item:a}
	 \item $P_0,\ldots, P_3$ are consecutive points of $K$ (in that order); \label{item:b}
	 \item $k\leq |K|\leq n$; \label{item:c}
	 \item $\slopes(K)\subseteq S$. \label{item:d}
\end{enumerate}

\end{lem}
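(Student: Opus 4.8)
The plan is to exploit the very rigid structure that an affinely-regular polygon imposes once four consecutive vertices $P_0,\ldots,P_3$ are fixed. By \cref{fact:affreg}(1), an affinely-regular $N$-gon is determined as the orbit of a single affine map $\phi$ of order $N$, and $\phi$ is itself determined by its action on three points. Since $P_0,\ldots,P_3$ must be \emph{consecutive} vertices of $K$, the map $\phi$ is forced: it must send $P_i\mapsto P_{i+1}$ for $i=0,1,2$. First I would compute this unique $\phi\in\mathrm{Aff}(2,\mathbb{R})$ from the input points (an $\mathcal{O}(1)$ linear-algebra computation, valid because the $P_i$ are simple). Once $\phi$ is known, the candidate polygon $K$ of size $N$ is entirely determined as $\{\phi^i(P_0)\mid 0\le i<N\}$, so the problem reduces to deciding whether there is a valid size $N$ with $k\le N\le n$ such that $\phi$ has order $N$ (so that the orbit closes up into a genuine affinely-regular $N$-gon) and $\slopes(K)\subseteq S$.

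\textbf{Key steps.}
The first substantive step is to characterise, purely in terms of $\phi$, for which $N$ the orbit forms an affinely-regular $N$-gon. An affine map of finite order $N\ge 3$ is (affinely) conjugate to a rotation by $2\pi/N$, so its linear part has eigenvalues $e^{\pm 2\pi i/N}$; equivalently its trace equals $2\cos(2\pi/N)$ and its determinant equals $1$. I would therefore compute the trace $t$ and determinant $d$ of the linear part of $\phi$: if $d\neq 1$ or $|t|\ge 2$ no finite order is possible and the answer is \texttt{NO}; otherwise $2\cos(2\pi/N)=t$ pins down a unique real angle $\theta=2\pi/N\in(0,\pi)$, hence at most one candidate value of $N$ (which must moreover be an integer in the allowed range $k\le N\le n$). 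This is again an $\mathcal{O}(1)$ computation in the real RAM model. The remaining work is the slope check \ref{item:d}, and this is where the linear time budget is spent.

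\textbf{The slope check in linear time.}
For the fixed candidate $N$, the polygon $K$ has exactly $N$ slopes, which by \cref{fact:affreg}(3) are $s_0=\slopes(P_{N-1}P_1)$ and $s_i=\slopes(P_0P_i)$ for $1\le i\le N-1$, appearing in a prescribed cyclic order. I would not compute all $N$ of these from scratch (that would risk being too slow if done naively against $S$); instead the idea is to use that $K\subseteq$ (affinely-regular $N$-gon) forces $\slopes(K)$ to be a set of $N$ consecutive slopes of the cyclically ordered input $S$, in the sense made precise in \cref{sec:npcomp}. Concretely, I would first verify that the four boundary slopes determined by $P_0,\ldots,P_3$ lie in $S$ and are consecutive there, then propagate outward using the recurrence $\phi(s_i)=s_{i+1}$ (\cref{fact:affreg}(2) gives $\slopes(P_{i-k}P_{j+k})=\slopes(P_iP_j)$, so consecutive slopes are related by applying $\phi$ to the line at infinity) and compare, in a single sweep, the generated slope sequence against the sorted list $S$. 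Since $S$ is given sorted and we generate $N\le n$ slopes in cyclic order, the matching is a linear merge, giving total cost $\mathcal{O}(n)$.

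\textbf{Main obstacle.}
The delicate point will be condition \ref{item:b}, that $P_0,\ldots,P_3$ be consecutive \emph{in the correct orientation}, together with the bookkeeping of the cyclic order of slopes. Fixing $\phi$ by $P_i\mapsto P_{i+1}$ builds in consecutiveness, but I must be careful that the generated orbit actually has $P_0,\ldots,P_3$ as genuine polygon vertices (not, say, every other vertex) and that the slope sequence produced by iterating $\phi$ matches the cyclic order on $S$ rather than its reverse; handling orientation and the even/odd split of \cref{fact:affreg}(4) correctly, while keeping everything in one linear pass, is the part that requires the most care. The arithmetic of extracting $N$ from $t=2\cos(2\pi/N)$ in the real RAM model is straightforward by comparison, since only one integer value of $N$ can satisfy it.
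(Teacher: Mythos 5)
Your overall skeleton matches the paper's proof: compute the unique $\phi\in\mathrm{Aff}(2,\mathbb{R})$ with $\phi(P_i)=P_{i+1}$ for $i=0,1,2$, decide whether it has order $d\in[k,n]$, generate the slopes of the candidate polygon $K=\{\phi^j(P_0)\}$, and test inclusion in the sorted list $S$ by a linear merge. However, your slope check contains a genuine error. You assert that $\slopes(K)$ is ``forced to be a set of $N$ consecutive slopes of the cyclically ordered input $S$'' and accordingly propose to verify that the four slopes determined by $P_0,\ldots,P_3$ lie consecutively in $S$. No such fact holds: condition (iv) only requires $\slopes(K)\subseteq S$, and $S$ may contain arbitrarily many extra slopes interleaved with those of $K$ (nothing in the problem relates the remaining $n-|K|$ slopes of $S$ to $K$ at all). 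On such inputs a valid $K$ exists but your algorithm answers \texttt{NO}. The fix is simply to drop the consecutiveness test: your stated worry that computing all the slopes of $K$ ``from scratch'' is too slow is unfounded. The paper computes $s_0=\slopes(P_1\phi^{d-1}(P_0))$ and $s_j=\slopes(P_0\phi^j(P_0))$ for $1\leq j\leq d-1$ directly from the orbit (already available after computing the powers of $\phi$); by \cref{fact:affreg} these come out in cyclic order, so the inclusion $\slopes(K)\subseteq S$ is checked by a single merge pass in $\mathcal{O}(n)$.

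Two further flaws, both repairable. First, your $\mathcal{O}(1)$ extraction of the order from the trace via $t=2\cos(2\pi/N)$ requires $\arccos$, which is not among the primitive operations of the model adopted in the paper (the real RAM there provides only $+,-,\cdot,/$ and comparisons); the paper instead computes $\phi,\phi^2,\ldots,\phi^n$ and reads off the order, which costs $\mathcal{O}(n)$ and stays within the time budget---you should do the same. Second, your propagation rule $\phi(s_i)=s_{i+1}$ is incorrect: $\phi$ maps the line $P_0P_i$ to $P_1P_{i+1}$, whose slope is $s_{i+2}$ by \cref{fact:affreg}, so $\phi$ advances the cyclic order of slopes by \emph{two} positions. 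This is exactly why the boundary slopes in \cref{fact:affreg} are $s_1,s_3,s_5,\ldots$, and for even $|K|$ the slopes split into two $\phi$-orbits, so iterating $\phi$ on a single slope does not even generate all of $\slopes(K)$. This parity issue is the bookkeeping you yourself flagged as delicate; generating the slopes as $\slopes(P_0\phi^j(P_0))$, as the paper does, avoids it entirely.
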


\begin{proof}
	Let $\phi$ be the unique affine transformation that maps $P_j$ to $P_{j+1}$, for $j=0, 1, 2$. By 
	\cref{fact:affreg}, if there exists a set $K$ satisfying \labelcref{item:a}, \labelcref{item:b} and \labelcref{item:c}, 
	$\phi$ must have order $|K| \in [k, n]$ (and $K$ is the orbit of $P_0$ under $\phi$). This explains the first 
	steps of the algorithm.
	
	\begin{algosteps}
	 	\item Compute $\phi$ (as a $3\times 3$ matrix). Compute $\phi ^j$, $1\leq j\leq n$.
	 
	 	\item If the order of $\phi$ is in the interval $[k, n]$, call it $d$. Otherwise, return \texttt{NO}. 
	 \end{algosteps}

	Suppose now that $\phi$ has order $d\in [k,n]$. Let $K \coloneqq \{\phi^j(P_0)\mid 0\leq j<d\}$. By 
	\cref{def:affreg} and \cref{fact:affreg}, $K$ is the unique affinely-regular polygon 
	satisfying \labelcref{item:a}, \labelcref{item:b} and \labelcref{item:c}. This means that we only need to check whether 
	$K$ satisfies $\slopes(K)\subseteq S$.  
	
	\begin{algosteps}[resume]
		\item We compute the slopes of $K$. Let $s_0 = \slopes(P_1\phi^{d-1}(P_0))$ and $s_j =  
		\slopes(P_0\phi^j(P_0))$, for $1\leq~j\leq~d-1$. By \cref{fact:affreg}, the slopes of $K$ are 
		exactly $s_0, \ldots, s_{d-1}$, in this order.
		
		\item Return \texttt{YES} if $\slopes(K) \subseteq S$, and \texttt{NO} otherwise. It is possible to 
		check the inclusion in linear time since both sides are sorted. \qedhere
	\end{algosteps}
\end{proof}

Now, we suppose that we already know four consecutive slopes determined by $K$. 

\begin{lem}
	There is a deterministic $\mathcal{O}(n)$ algorithm for the following problem. \label{lem:fourslopes}

\begin{inputlist}
	\item A sorted list $S$ of $n$ slopes; 
	\item A natural number $k\leq n$;
	\item A list $(s_0, \ldots, s_3)$ of four distinct slopes.
\end{inputlist}
				
\begin{problemlist} 
	\item Does there exist a point set $K$ satisfying the following conditions?
\end{problemlist}
\begin{enumerate}[wide, nosep, labelindent = 54pt, label=\textcolor{black}{(}\Roman*\textcolor{black}{)}]
	 \item $K$ forms an affinely-regular polygon; \label{item:i}
	 \item $s_0,\ldots, s_3$ are four \emph{consecutive} slopes of $K$ (in that order); \label{item:ii}
	 \item $k\leq |K|\leq n$; \label{item:iii}
	 \item $\slopes(K)\subseteq S$. \label{item:iv}
\end{enumerate}

\end{lem}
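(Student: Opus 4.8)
The plan is to reduce to \cref{lem:fourpoints} by reconstructing, from the four consecutive slopes, four consecutive \emph{points} of the sought polygon. The structural input is \cref{fact:affreg}: if $P_0,\ldots,P_{d-1}$ are the vertices of an affinely-regular polygon in cyclic order, then (by its property~2) the slope of a chord $P_aP_b$ depends only on $a+b \bmod d$. Writing $\sigma_m$ for the common slope of all chords $P_aP_b$ with $a+b\equiv m$, the slopes of $K$ are $\sigma_0,\ldots,\sigma_{d-1}$ in cyclic order, and ``four consecutive slopes'' means a window $\sigma_m,\sigma_{m+1},\sigma_{m+2},\sigma_{m+3}$. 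Such a window is almost realized by four \emph{consecutive vertices}: taking, for $m$ even, $R_0,R_1,R_2,R_3 = P_{m/2-1},P_{m/2},P_{m/2+1},P_{m/2+2}$, the index-sum formula gives $\slopes(R_0R_2)=\sigma_m$, $\slopes(R_1R_2)=\slopes(R_0R_3)=\sigma_{m+1}$, $\slopes(R_1R_3)=\sigma_{m+2}$ and $\slopes(R_2R_3)=\sigma_{m+3}$, while only the sixth slope $\slopes(R_0R_1)=\sigma_{m-1}$ is not among the data.

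First I would build a candidate quadruple from the input $(s_0,s_1,s_2,s_3)$ by imposing
$$\slopes(R_0R_2)=s_0,\quad \slopes(R_0R_3)=\slopes(R_1R_2)=s_1,\quad \slopes(R_1R_3)=s_2,\quad \slopes(R_2R_3)=s_3,$$
where the middle equality is the parallelism forced by the polygon structure, and recovering the missing slope $\slopes(R_0R_1)$ as the root of the equation $Q(\cdot)=0$ of \cref{lem:polynomial}, which is \emph{linear} in that single unknown since $Q$ is multilinear. Once all six slopes are known, \cref{rem:sim} produces explicit coordinates for $R_0,\ldots,R_3$, unique up to $\mathbb{H}$; this is an $\mathcal{O}(1)$ computation. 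If the linear equation fails to determine a unique admissible slope, or the recovered points fail to be simple, no affinely-regular polygon realizes this configuration, and I return \texttt{NO}. Otherwise I call \cref{lem:fourpoints} on $(S,k,(R_0,R_1,R_2,R_3))$. Correctness in both directions uses that $\mathbb{H}$ preserves slopes and the affinely-regular property: an affinely-regular $K$ having $s_0,\ldots,s_3$ as consecutive slopes (starting at an even index) yields four consecutive vertices homothetic to my reconstruction, and conversely any polygon returned by \cref{lem:fourpoints} has $\slopes(R_0R_2),\slopes(R_1R_2),\slopes(R_1R_3),\slopes(R_2R_3)$ as four consecutive slopes equal to $s_0,\ldots,s_3$.

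The remaining subtlety, which I expect to be the only real obstacle, is that the window could equally start at an \emph{odd} index $m$, in which case the roles of edges and short diagonals are swapped and the reconstruction differs. Rather than analyse a second case, I would observe that reversing the vertex order turns the odd-start reconstruction on $(s_0,s_1,s_2,s_3)$ into the even-start reconstruction on the reversed list $(s_3,s_2,s_1,s_0)$; since reversing the order of consecutive vertices leaves both ``$K$ is affinely-regular'' and ``$\slopes(K)\subseteq S$'' unchanged, it suffices to run the procedure above on $(s_0,s_1,s_2,s_3)$ and on $(s_3,s_2,s_1,s_0)$ and answer \texttt{YES} if either call does. Each reconstruction is $\mathcal{O}(1)$ and each of the two calls to \cref{lem:fourpoints} is $\mathcal{O}(n)$, so the whole algorithm runs in $\mathcal{O}(n)$ time, as required.
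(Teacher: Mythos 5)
Your algorithm is, structurally, exactly the paper's: it reduces to \cref{lem:fourpoints} by testing two candidate quadruples of consecutive vertices, one for each parity of the position of the slope window in the polygon's cyclic slope order. Indeed, your ``even-start'' quadruple satisfies precisely the paper's conditions \labelcref{eq:quad2} (the case where $s_1$ is a boundary slope, in the paper's phrasing via \cref{fact:affreg}), and your ``reversed-list'' quadruple becomes, after relabelling $R_i \mapsto R_{3-i}$, the paper's quadruple \labelcref{eq:quad1} (the case where $s_0$ is a boundary slope); your parity dichotomy on index sums is the same thing as the paper's dichotomy ``$s_0$ or $s_1$ is a boundary slope''.

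The one step where you deviate is how the candidate points are produced, and that step as written has a real gap. You recover the missing slope $\slopes(R_0R_1)$ as the root of the equation $Q=0$, linear in that unknown, and then invoke \cref{rem:sim} to ``produce coordinates''; but \cref{rem:sim} is a uniqueness statement, not an existence statement, and \cref{lem:polynomial} only asserts that $Q=0$ is \emph{necessary}, so six numbers satisfying $Q=0$ do not by themselves yield a realizing point set. More seriously, your fallback claim --- that a degenerate linear equation means ``no affinely-regular polygon realizes this configuration'' --- is false. The five imposed slope equalities are \emph{always} realized by a simple quadruple, unique up to $\mathbb{H}$ (intersect lines, using distinctness of the $s_i$); the linear equation degenerates exactly when that quadruple's sixth side is vertical. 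For instance, $(s_0,s_1,s_2,s_3)=\left(0,1,\tfrac{3}{2},2\right)$ forces $\slopes(R_0R_1)=\infty$, and correspondingly the coefficient $-(s_1-s_2)(s_3-s_0)+(s_0-s_1)(s_3-s_1)$ of the unknown vanishes. Returning \texttt{NO} for that candidate happens to be sound, but only because a witnessing polygon would then have $\infty\in\slopes(K)\subseteq S$, which is impossible when $S$ is a sorted list of real slopes --- an argument you never make. All of this evaporates if you do what the paper does: construct the quadruple directly from the five slope equalities by intersecting lines (such points exist, and are automatically simple, because $s_0,\ldots,s_3$ are distinct), and let \cref{lem:fourpoints} do all the checking; the sixth slope is never needed.
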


\begin{proof}
We reduce the task to \cref{lem:fourpoints}. We first give the two steps of the reduction and 
then provide the explanations.
\begin{algosteps} 
	\item Compute four distinct points $P_0,\ldots, P_3$ (resp.~$\tilde{P_0}, \ldots, \tilde{P_3}$) 
	satisfying the equalities \labelcref{eq:quad1} (resp.~\labelcref{eq:quad2}) below. Such points exist as 
	$s_0, \ldots, s_3$ are distinct.
	\begin{equation}
		\slopes({P}_0{P}_1) = s_0,\, \slopes({P}_0{P}_2) = s_1,\, \slopes({P}_1{P}_2) = s_2 = 
		\slopes({P}_0{P}_3),\,\text{and}\, \slopes({P}_1{P}_3) = s_3
		\label{eq:quad1}
	\end{equation}
	\vspace{-0.6cm}
	\begin{equation}
		\slopes(\tilde{P}_0\tilde{P}_2) = s_0,\, \slopes(\tilde{P}_1\tilde{P}_2) = s_1 = 
		\slopes(\tilde{P}_0\tilde{P}_3),\, \slopes(\tilde{P}_1\tilde{P}_3) = s_2,\, \text{and}\,
		\slopes(\tilde{P}_2\tilde{P}_3) = s_3
		\label{eq:quad2}
	\end{equation}

	\item For $\overline{P}\in \{(P_0,\ldots, P_3), (\tilde{P_0},\ldots, \tilde{P_3})\}$, run the algorithm 
	of \cref{lem:fourpoints} with $S$, $k$ and $\overline{P}$ as inputs. 
	Return \texttt{YES} if one of the two outputs is \texttt{YES}, and \texttt{NO} if both are 					\texttt{NO}.
\end{algosteps}

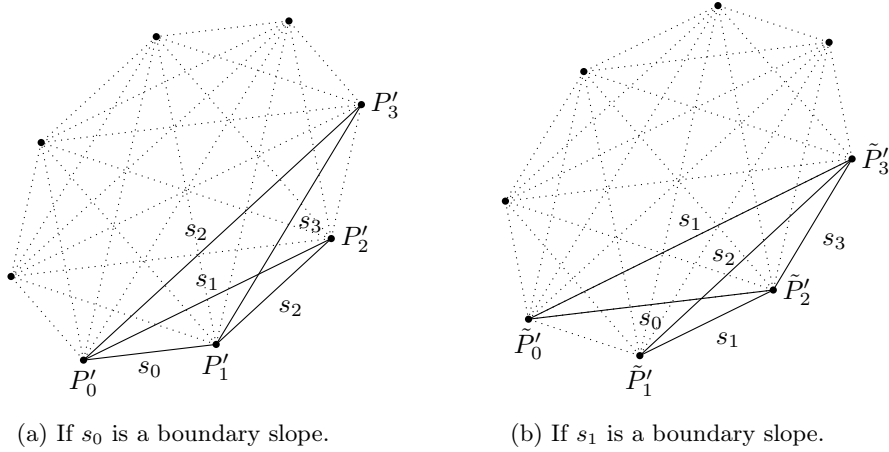
\begin{figure}[H]
\centering
\begin{subfigure}[b]{0.30\linewidth}
\begin{tikzpicture}[scale = 0.45, rotate=0, y=1cm]
\node (1) at (9.99846, -5.99728) {};
\node (2) at (13.87484, -5.53445) {};
\node (3) at (17.23875, -2.41539) {};
\node (4) at (18.11966, 1.53276) {};
\node (5) at (16.00154, 3.99728) {};
\node (6) at (12.12516, 3.53445) {};
\node (7) at (8.76126, 0.41542) {};
\node (8) at (7.88034, -3.53276) {};

\fill (1) circle (3pt); 
\fill (2) circle (3pt); 
\fill (3) circle (3pt); 
\fill (4) circle (3pt); 
\fill (5) circle (3pt); 
\fill (6) circle (3pt); 
\fill (7) circle (3pt); 
\fill (8) circle (3pt); 

\node[below] at (1) {$P_0'$};
\node[below] at (2) {$P_1'$};
\node[right] at (3) {$P_2'$};
\node[right] at (4) {$P_3'$};

\draw (1.center) -- node[below] {$s_0$} (2.center);
\draw (1.center) -- node[above] {$s_1$} (3.center);
\draw (2.center) -- node[below right] {\!$s_2$} (3.center);
\draw (1.center) -- node[left] {$s_2$\,} (4.center);
\draw (2.center) -- node[right] {$s_3$} (4.center);

\draw[dotted] (1.center) -- (5.center);
\draw[dotted] (1.center) -- (6.center);
\draw[dotted] (1.center) -- (7.center);
\draw[dotted] (1.center) -- (8.center);
\draw[dotted] (2.center) -- (5.center);
\draw[dotted] (2.center) -- (6.center);
\draw[dotted] (2.center) -- (7.center);
\draw[dotted] (2.center) -- (8.center);
\draw[dotted] (3.center) -- (4.center);
\draw[dotted] (3.center) -- (5.center);
\draw[dotted] (3.center) -- (6.center);
\draw[dotted] (3.center) -- (7.center);
\draw[dotted] (3.center) -- (8.center);
\draw[dotted] (4.center) -- (5.center);
\draw[dotted] (4.center) -- (6.center);
\draw[dotted] (4.center) -- (7.center);
\draw[dotted] (4.center) -- (8.center);
\draw[dotted] (5.center) -- (6.center);
\draw[dotted] (5.center) -- (7.center);
\draw[dotted] (5.center) -- (8.center);
\draw[dotted] (6.center) -- (7.center);
\draw[dotted] (6.center) -- (8.center);
\draw[dotted] (7.center) -- (8.center);
\end{tikzpicture}
\caption{If $s_0$ is a boundary slope.}
\end{subfigure}\hfil 
\begin{subfigure}[b]{0.30\linewidth}
\begin{tikzpicture}[scale = 0.45, rotate=0, y=1cm]
\node (1) at (11.86349, -6.15679) {};
\node (2) at (15.76403, -4.22321) {};
\node (3) at (18.06681, -0.3592) {};
\node (4) at (17.39444, 3.07606) {};
\node (5) at (14.14701, 4.15806) {};
\node (6) at (10.22686, 2.21476) {};
\node (7) at (7.93557, -1.60339) {};
\node (8) at (8.6113, -5.08273) {};

\fill (1) circle (3pt); 
\fill (2) circle (3pt); 
\fill (3) circle (3pt); 
\fill (4) circle (3pt); 
\fill (5) circle (3pt); 
\fill (6) circle (3pt); 
\fill (7) circle (3pt); 
\fill (8) circle (3pt); 

\node[below] at (8) {$\tilde{P}_0'$};
\node[below] at (1) {$\tilde{P}_1'$};
\node[right] at (2) {$\tilde{P}_2'$};
\node[right] at (3) {$\tilde{P}_3'$};

\draw (8.center) -- node[below] {$s_0$} (2.center);
\draw (1.center) -- node[below right] {$s_1$} (2.center);
\draw (8.center) -- node[above] {$s_1$} (3.center);
\draw (1.center) -- node[left] {$s_2$} (3.center);
\draw (2.center) -- node[below right] {$s_3$} (3.center);

\draw[dotted] (1.center) -- (4.center);
\draw[dotted] (1.center) -- (5.center);
\draw[dotted] (1.center) -- (6.center);
\draw[dotted] (1.center) -- (7.center);
\draw[dotted] (1.center) -- (8.center);
\draw[dotted] (2.center) -- (4.center);
\draw[dotted] (2.center) -- (5.center);
\draw[dotted] (2.center) -- (6.center);
\draw[dotted] (2.center) -- (7.center);
\draw[dotted] (2.center) -- (8.center);
\draw[dotted] (3.center) -- (4.center);
\draw[dotted] (3.center) -- (5.center);
\draw[dotted] (3.center) -- (6.center);
\draw[dotted] (3.center) -- (7.center);
\draw[dotted] (3.center) -- (8.center);
\draw[dotted] (4.center) -- (5.center);
\draw[dotted] (4.center) -- (6.center);
\draw[dotted] (4.center) -- (7.center);
\draw[dotted] (4.center) -- (8.center);
\draw[dotted] (5.center) -- (6.center);
\draw[dotted] (5.center) -- (7.center);
\draw[dotted] (5.center) -- (8.center);
\draw[dotted] (6.center) -- (7.center);
\draw[dotted] (6.center) -- (8.center);
\draw[dotted] (7.center) -- (8.center);
\end{tikzpicture}
\caption{If $s_1$ is a boundary slope.}
\end{subfigure}
\caption{Construction of four consecutive points of $K$ in the proof of \cref{lem:fourslopes}.}
\label{fig:octogons}
\end{figure}

Let us explain why this algorithm is correct.

\begin{itemize}[noitemsep, topsep=2pt]
	\item Suppose that there exists an affinely-regular $d$-gon $K$, of which $s_0, \ldots, s_3$ are 
	consecutive slopes. Then, by \cref{fact:affreg}, at least one of $s_0$ and $s_1$ is a 
	boundary slope of $K$. We have the following alternative (see \cref{fig:octogons}):

	\begin{enumerate}[(a),  noitemsep, topsep=2pt]
		\item  If $s_0$ is a boundary slope of $K$, there are four consecutive vertices ${P}_0', 
		\ldots, {P}_3'$ of $K$ such that 
		\begin{equation}\!\!\!\!\!
			\slopes({P}_0'{P}_1') = s_0,\, \slopes({P}_0'{P}_2') = s_1,\, \slopes({P}_1'{P}_2') = s_2 = 					\slopes({P}_0'{P}_3'),\,\text{and}\, \slopes({P}_1'{P}_3') = s_3.
			\tag{\ref*{eq:quad1}'} \label{eq:quad1tilde}
		\end{equation}

		\item If $s_1$ is a boundary slope of $K$, there are four consecutive vertices $\tilde{P}_0', 	
		\ldots, \tilde{P}_3'$ of $K$ such that 
		\begin{equation}\!\!\!\!\!
			\slopes(\tilde{P}_0'\tilde{P}_2') = s_0,\, \slopes(\tilde{P}_1'\tilde{P}_2') = s_1 = 		
			\slopes(\tilde{P}_0'\tilde{P}_3'),\, \slopes(\tilde{P}_1'\tilde{P}_3') = s_2,\, \text{and}\,
			\slopes(\tilde{P}_2'\tilde{P}_3') = s_3.
			\tag{\ref*{eq:quad2}'} \label{eq:quad2tilde}
		\end{equation}
	\end{enumerate}

	The conditions \labelcref{eq:quad1tilde} uniquely determine the distinct points $\tilde{P}_0,\ldots, \tilde{P}
	_3$, up to an element of $\mathbb{H}$ (the group of homotheties and translations). Therefore, in the 
	case (a), we may assume that $\tilde{P}_0, \ldots, \tilde{P}_3$ are precisely the points constructed in 
	the first step of the algorithm (by applying an element of $\mathbb{H}$ to $K$ if necessary). The 
	same is true with $\tilde{P}_0', \ldots, \tilde{P}_3'$ in the case (b). So we are in the setting of 
	\cref{lem:fourpoints}.
	
	\item Conversely, it is clear that a set $K$ satisfying properties \labelcref{item:a} through \labelcref{item:d} of 		
	\cref{lem:fourpoints} for one of those two quadruples will also satisfy properties \labelcref{item:i} 
	through \labelcref{item:iv}. \qedhere
	
\end{itemize}

\end{proof}

\begin{remark}
	In fact, it is not necessary to consider case (b), because it is possible to prove the following. If $s_0$ 	
	is a slope of an affinely-regular polygon $\tilde{K}$, there is another affinely-regular polygon $K$ 
	which has $s_0$ as a boundary slope and such that $\slopes(\tilde{K}) = \slopes(K)$.
\end{remark}

We can now give the claimed algorithm. The problem here is that we do not know a priori which slopes of 
$S$ will be used by $K$.

\begin{prop}
Assuming Jamison's conjecture, there is an $\mathcal{O}((n-k+1)^4n)$ time deterministic algorithm for the 
\SCGD problem restricted to $n\leq 2k-c_1$:
\label{prop:exact}

\begin{inputlist}
	\item A sorted list $S$ of $n$ slopes 
	\item A natural number $k$ such that $n\leq 2k-c_1$
\end{inputlist}
				
\begin{problemlist} 
	\item Does there exist a simple set $K$ of $k$ points with $\slopes(K)\subseteq S$?
\end{problemlist}

\end{prop}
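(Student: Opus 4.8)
The plan is to use Jamison's conjecture to reduce the existence of a $k$-point solution to the existence of a suitable affinely-regular polygon, and then to locate that polygon by feeding a small family of candidate slope-quadruples to \cref{lem:fourslopes}. Concretely, suppose first that the answer is \texttt{YES}, with witness a simple set $K_0$ of $k$ points satisfying $\slopes(K_0)\subseteq S$. Writing $m\coloneqq |\slopes(K_0)|$, we have $k\leq m\leq n\leq 2k-c_1$, so Jamison's conjecture applies and $K_0$ is contained in an affinely-regular $m$-gon $P$. By \cref{fact:affreg}, $P$ has exactly $m$ slopes; since $\slopes(K_0)\subseteq \slopes(P)$ and both sets have size $m$, we get $\slopes(P)=\slopes(K_0)\subseteq S$ and $k\leq |P|=m\leq n$. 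Conversely, any affinely-regular polygon $K$ with $k\leq |K|\leq n$ and $\slopes(K)\subseteq S$ yields a solution: its vertices are in general position (being an affine image of a regular polygon), so any $k$ of them form a simple set whose slopes lie in $\slopes(K)\subseteq S$. Thus the \SCGD problem, in this range, is equivalent to deciding whether such a polygon $P$ exists.

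To find $P$, I would guess four consecutive slopes of it and call \cref{lem:fourslopes}. Write the sorted input as $S=(\sigma_1,\ldots,\sigma_n)$ and let $a_1<\cdots<a_m$ be the positions in $S$ occupied by $\slopes(P)$. Set $s_j\coloneqq \sigma_{a_{j+1}}$ for $j=0,1,2,3$, i.e.~the first four slopes of $P$ in the cyclic order of $S$; these are four \emph{consecutive} slopes of $P$, because no slope of $S$ (hence none of $\slopes(P)$) lies strictly between two of them. The crucial point is that the positions $a_j$ cannot spread out: the number of positions of $S$ unused by $P$ is $n-m\leq n-k$, and this single quantity bounds both the initial offset $a_1-1$ and each gap $a_{j+1}-a_j-1$. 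Hence $a_1\leq n-k+1$ and $a_{j+1}-a_j\leq n-k+1$ for $j=1,2,3$, so $(a_1,a_2,a_3,a_4)$ belongs to the set $\mathcal{Q}$ of quadruples $(b_1,b_2,b_3,b_4)$ with $1\leq b_1\leq n-k+1$, $1\leq b_{j+1}-b_j\leq n-k+1$ and $b_4\leq n$. Since $|\mathcal{Q}|\leq (n-k+1)^4$, the correct quadruple is one we can afford to enumerate.

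The algorithm is then immediate: for every $(b_1,b_2,b_3,b_4)\in\mathcal{Q}$, run the $\mathcal{O}(n)$ procedure of \cref{lem:fourslopes} on input $(S,k,(\sigma_{b_1},\sigma_{b_2},\sigma_{b_3},\sigma_{b_4}))$, and answer \texttt{YES} iff at least one call does. For correctness, if a solution exists then $(s_0,s_1,s_2,s_3)$ constructed above is a valid input witnessing $P$, so some call returns \texttt{YES}; conversely, any \texttt{YES} from \cref{lem:fourslopes} produces an admissible affinely-regular polygon, hence a solution by the reduction of the first paragraph. The running time is $|\mathcal{Q}|\cdot \mathcal{O}(n)=\mathcal{O}((n-k+1)^4 n)$, the list $S$ being already sorted.

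I expect the main obstacle to be the counting argument bounding $|\mathcal{Q}|$: one must check carefully that the four consecutive slopes of $P$ can always be chosen so that both the initial offset and the three gaps are controlled by the same budget $n-k$ of unused slopes, and that these four slopes are genuinely consecutive in $\slopes(P)$ rather than merely in $S$. The remaining ingredients—the validity of the reduction (which hinges only on $m\leq n\leq 2k-c_1$, so that Jamison's conjecture is applicable) and the passage from an affinely-regular polygon of size $\geq k$ back to a simple $k$-point solution—are routine given \cref{fact:affreg} and \cref{lem:fourslopes}.
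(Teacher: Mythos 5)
Your proof is correct and follows essentially the same route as the paper: reduce, via Jamison's conjecture, to the existence of an affinely-regular polygon $P$ with $k\leq |P|\leq n$ and $\slopes(P)\subseteq S$, note that the budget of at most $n-k$ slopes of $S$ unused by $P$ forces four consecutive slopes of $P$ to lie in a small explicit candidate family, and test each candidate quadruple with \cref{lem:fourslopes}. The only (immaterial) difference is the parametrization of that family: the paper enumerates all $\binom{(n-k)+4}{4}$ four-element subsets of the first $n-k+4$ slopes of $S$, whereas you enumerate position quadruples with bounded initial offset and bounded gaps; both families have size $\mathcal{O}((n-k+1)^4)$, contain the true quadruple, and yield the stated $\mathcal{O}((n-k+1)^4 n)$ running time.
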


\begin{proof}
If $n<k$, return \texttt{NO}, as every simple set of $k$ points has at least $k$ slopes.
By Jamison's conjecture, the problem is equivalent to: ``does there exist an affinely-regular $m$-gon $P$ 
with $\slopes(P)\subseteq S$ for some $m\geq k$?''. Let $r\coloneqq n-k$.

\begin{algosteps} 
	\item Let $T$ be the list of the first $r+4$ slopes of $S$. 
\end{algosteps}

Suppose that there exists an affinely-regular $m$-gon $P$ with $\slopes(P)\subseteq S$ for some $m\geq 
k$. We have $\lvert\slopes(P) \cap T \rvert\geq 4$, as otherwise $\lvert\slopes(S)\rvert \geq \lvert\slopes(P) \cup T \rvert \geq  m+
(r+4) - 3 > n$. So, there exist four slopes of $T$ which are also slopes of $P$. 
Let $s_0, \ldots, s_3$ be the first four occurrences of slopes of $P$ in the list $T$, in this order. By construction, the slopes $T$ were chosen to be consecutive slopes of $S$. As $\slopes(P)
\subseteq S$, this implies that $s_0, \ldots , s_3$ must be consecutive slopes of $P$.

\begin{algosteps}[resume] 
	\item For every subsequence of four slopes of $T$, execute the algorithm of 		
	\cref{lem:fourslopes} (with the same $S$ and $k$). If the output is \texttt{YES} for at least one 
	subsequence of four slopes of $T$, return \texttt{YES}. Otherwise, return \texttt{NO}.
\end{algosteps}

As the algorithm in \cref{lem:fourslopes} has runtime $\mathcal{O}(n)$, the time complexity of the 
full algorithm is $\mathcal{O}\left({r+4\choose 4}n\right)$.
\end{proof}

\begin{remark}
\begin{itemize}[noitemsep, topsep=2pt]
	\item If we restrict the inputs to have $n\leq k+M$ for some fixed $M$, we get an algorithm in $
	\mathcal{O}(n)$. This is the optimal complexity since all the slopes have to be taken into account in 
	the worst case.
	
	\item Jamison's conjecture was proven in the cases $n=k$ (by Jamison in his original 
	paper~\cite{JamisonNoncollinear}) and $n=k+1$ (recently, by Pilatte~\cite{Pilatte}). In those cases, the 
	correctness of our algorithm is independent of any assumption.
	
	\item For $n=k$, Wade and Chu presented an algorithm in $\mathcal{O}(n^4)$ for this problem 
	(see~\cite{WadeChu}). We have thus reduced the complexity to $\mathcal{O}(n)$. For $n=k+1$, no 
	polynomial algorithm had been proposed before.
\end{itemize} 
\end{remark}

\subsection{Monte-Carlo algorithm}
	The previous algorithm has runtime $\mathcal{O}((n-k+1)^4n)$, which is $\mathcal{O}(n^5)$ in the 
	worst case. We can improve it to $\mathcal{O}(n)$ if we are willing to use a probabilistic algorithm.

\begin{prop}
	Assuming Jamison's conjecture, there is a one-sided error Monte-Carlo algorithm with running time $
	\mathcal{O}(n)$ for the decision problem described in \cref{prop:exact}.
	\label{prop:probabilistic}
\end{prop}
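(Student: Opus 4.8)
The plan is to keep the reduction behind \cref{prop:exact}---under Jamison's conjecture a \texttt{YES} instance is equivalent to the existence of an affinely-regular $m$-gon $P$ with $\slopes(P)\subseteq S$ and $m\geq k$---but to replace the exhaustive search over all $\binom{r+4}{4}$ quadruples (with $r=n-k$) by a \emph{randomized guess} of four consecutive slopes of $P$. Exactly as in the deterministic case, each guessed ordered quadruple of distinct slopes will be handed to the $\mathcal{O}(n)$ routine of \cref{lem:fourslopes}, which outputs \texttt{YES} only when a genuine affinely-regular polygon having those four consecutive slopes and all its slopes in $S$ exists. This gives one-sided error for free: on a \texttt{NO} instance no such polygon exists, so every call returns \texttt{NO} and the algorithm is certainly correct; the only possible mistakes are false negatives on \texttt{YES} instances.

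The quantitative input I would use is that the slopes of $P$ form a strict majority of $S$. Indeed $m\geq k$ together with $n\leq 2k-c_1$ gives $m/n>\tfrac12$, so at most $b:=n-m<n/2$ slopes of $S$ are not slopes of $P$. Since $\slopes(P)\subseteq S$ inherits the same cyclic order, the slopes of $P$ occur among the sorted slopes of $S$ in their $P$-order; hence any $s_0\in\slopes(P)$, together with the next three elements of $\slopes(P)$ met while scanning $S$ forward from $s_0$, is precisely a block of four consecutive slopes of $P$ in the order required by \cref{lem:fourslopes}. For such a ``good anchor'' $s_0$ I would set $\mathrm{reach}(s_0)$ to be the number of slopes of $S$ lying strictly after $s_0$ up to and including the third subsequent slope of $P$; this is $3$ plus the number of non-$P$ slopes in that stretch. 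Summing over the $m$ good anchors, the constant term contributes $3m$, while each non-$P$ slope is counted at most three times (it falls in $\mathrm{reach}(s_0)$ only for the three slopes of $P$ immediately preceding it). Thus $\sum_{s_0\in\slopes(P)}\mathrm{reach}(s_0)\leq 3m+3b=3n$, and by Markov at most $3n/L$ good anchors satisfy $\mathrm{reach}>L$. Consequently a uniformly random slope of $S$ is a good anchor whose next three $P$-slopes lie within the following $L$ slopes with probability at least $m/n-3/L\geq \tfrac12-3/L$.

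This leads to the following algorithm, with $L$ a fixed constant (say $L=24$) and a fixed number $t$ of rounds. In each round I draw a uniformly random slope $s_0$ of $S$, take the window $W$ of the $L$ slopes following $s_0$ in cyclic order, and for each of the $\binom{L}{3}=\mathcal{O}(1)$ triples of $W$ (listed in the order they appear in $W$) run \cref{lem:fourslopes} on the resulting quadruple $(s_0,s_1,s_2,s_3)$; I output \texttt{YES} as soon as one call does, and \texttt{NO} if all rounds fail. Each round costs $\mathcal{O}(1)$ calls of cost $\mathcal{O}(n)$, so the total running time is $\mathcal{O}(n)$. On a \texttt{YES} instance a single round succeeds whenever $s_0$ is a good anchor with $\mathrm{reach}(s_0)\leq L$---an event of probability at least $\tfrac12-3/L\geq \tfrac38$---because then the correct quadruple is among the triples tried and \cref{lem:fourslopes} accepts it. Hence the failure probability after $t$ rounds is at most $(5/8)^{t}$, a constant bounded away from $1$, which can be driven below any prescribed $\varepsilon$ with $t=\mathcal{O}(1)$ rounds.

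The hard part is the coverage estimate of the second paragraph: one must guarantee that a constant-length window after a random good anchor captures three further slopes of $P$ with constant probability, even though the non-$P$ slopes could a priori be bunched into long runs. This is exactly where the restriction $n\leq 2k-c_1$ enters, forcing $m>n/2$ so that the $\mathrm{reach}$ sum collapses to $\mathcal{O}(n)$ instead of growing; were the non-$P$ slopes allowed to be a majority, no fixed window length would suffice and the telescoping bound would fail. A smaller but necessary point to check is that feeding \cref{lem:fourslopes} an incorrect (for instance non-consecutive) quadruple can never yield a spurious \texttt{YES}; this is immediate from the specification of that lemma and is what preserves the one-sided error throughout.
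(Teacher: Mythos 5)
Your proposal is correct and takes essentially the same route as the paper: a uniformly random slope anchors a constant-length window of consecutive slopes of $S$, all constant-many quadruples from that window are fed to \cref{lem:fourslopes} (whose specification gives the one-sided error for free), and the density bound $\lvert\slopes(P)\rvert/|S|\geq \tfrac12$ coming from $n\leq 2k-c_1$ yields a constant per-round success probability. The paper's accounting is slightly lighter because it does not force the random slope to belong to the tested quadruple: it takes a window $T$ of $12$ consecutive slopes containing the random slope, notes $\mathbb{E}\,\lvert T\cap\slopes(P)\rvert\geq 6$, and applies Markov's inequality to get error at most $2/3$ in a single round, whereas your ``good anchor plus reach'' double-counting is a correct but somewhat heavier way of arriving at the same constant.
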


\begin{proof}
The idea is the same as in the proof of \cref{prop:exact}: we will find quadruples of four 
consecutive slopes in $S$ and apply \cref{lem:fourslopes} with them. 

\begin{algosteps}
	\item Pick one slope $t_1$ of $S$ uniformly at random.

	\item Select the slopes $t_2, \ldots, t_{12}$ of $S$ such that $T = (t_1, \ldots, t_{12})$ are 
	consecutive slopes in $S$.

	\item For each subsequence of four slopes of $T$, use \cref{lem:fourslopes}. If the output is 
	\texttt{YES} for at least one subsequence, return \texttt{YES}. Otherwise, return \texttt{NO}.
\end{algosteps}

\begin{itemize}[noitemsep, topsep=2pt]
	\item If this algorithm outputs \texttt{YES}, the existence of a valid set $K$ is guaranteed by 
	\cref{lem:fourslopes}, so the output is correct. 
	
	\item What is left is to show that the probability of incorrectly outputting \texttt{NO} is bounded away from 
	$1$. Suppose that the correct answer is \texttt{YES}. Equivalently, by Jamison's conjecture, there is an 		
	affinely-regular polygon $P$ of with $\slopes(P)\subseteq S$ and $|P|\geq k$. Let $X$ be the random 
	variable defined by $X = |T\cap \slopes(P)|$ ($S$ and $P$ are fixed and $T$ is random). The algorithm 	
	outputs \texttt{YES} whenever $X\geq 4$. As $\lvert\slopes(P)\rvert \geq |P| \geq k \geq \frac{n+c_1}{2}\geq \frac 12 
	|S|$, we have $$\mathbb{E}(X) = \sum_{1\leq i\leq 12} \mathbb{P}[t_i\in \slopes(P)] = 
	12\cdot \frac{\lvert
	\slopes(P)\rvert}{|S|} \geq 6.$$
	Hence $\mathbb{P}[\text{output is }\texttt{NO}] \leq \mathbb{P}[X<4]= \mathbb{P}[12 - X\geq 9] \leq  
	2/3$, by Markov's inequality, which completes the proof.\qedhere
\end{itemize}
\end{proof}

\section*{Acknowledgements}

I am very grateful to Christian Michaux for his lasting support and valuable advice, and to Pierre 
Vandenhove for the fruitful discussions that we had about slope-generic sets.

\bibliographystyle{plain}
\bibliography{reviewed}

\newpage
\appendix

\section{Source code for \cref{lem:example}}

\label{appendix:program}

This is the source code of the Python program mentioned in the proof of \cref{lem:example}. 
It runs in Python 3.7.7. It uses the Sympy library for symbolic computations: see~\url{https://www.sympy.org}.
It is also available as a Python file on arXiv: \url{https://arxiv.org/src/2001.04671v4/anc}.

\begin{minted}[linenos, fontsize=\footnotesize]{python} 
from time import sleep
import itertools as it
from sympy import * # Symbolic computations

def parse_expression(expression):
  '''
  Input: a linear combination of the X_i's, e.g. 2*X_1-X_3+X_4.
  Output: two lists of terms with coefficients, one with the positive coefficients
      and one with negative coefficients.e.g. [(2, X_1), (1, X_4)], [(-1, X_3)].
  '''
  list_terms = Add.make_args(expression)
  coeff_terms = [Mul.make_args(term) for term in list_terms]
  for i, term in enumerate(coeff_terms):
    if len(term) == 1: # No coefficient -> the coefficient is 1
      coeff_terms[i] = (1, term[0])
  positive_terms = [t for t in coeff_terms if t[0] > 0]
  negative_terms = [t for t in coeff_terms if t[0] < 0]
  return positive_terms, negative_terms

def substitute_all(old_var, new_var, equations, remove_trivial = False):
  '''
  Substitutes the occurrences of old_var by new_var in all equations.
  If 'remove_trivial' is True, all expressions which are zero after substitution are discarded.
  '''
  new_equations = [eq.subs(old_var, new_var) for eq in equations]
  return [new_eq for new_eq in new_equations if new_eq != 0 or not remove_trivial]

def solve_recurs(equations, inequations, substitutions = []):
  '''
  Input: - 'equations' and 'inequations', two lists of expressions (each expression
       is a linear combinations of the X_i's)
       - 'substitutions', a list of pairs of variables (X_i, X_j), indicating that X_i has
         previously been substituted with X_j
  Output: A list of substitutions

  Finds all partitions on the set of variables {X_0, ..., X_11} with the property that:
  - for every expression in 'equations', the reduced expression is zero
  - for every expression in 'inequations', the reduced expression is nonzero.

  By reduced expression, we mean the following. If X_i_1, ..., X_i_k is a set of representatives
  for the partition, and if 'e' is an expression, the reduced version of 'e' is the expression
  obtained by
  1) substituting in 'e' every variable (an element of {X_0, ..., X_11}) by the
     variable X_i_k that is in the same class of the partition;
  2) simplifying the expression as much as possible.

  The partition is represented as a sequence of substitutions. The partition corresponding
  to a list of substitutions is the partition with the fewest number of classes with the
  following property: for every substitution (X_i, X_j), X_i and X_j are in the same class.
  '''
  ans_substitutions = []
  if len(equations) == 0:
    return [substitutions]
  positive_terms, negative_terms = parse_expression(equations[0])
  old_var = positive_terms[0][1] # old_var is in the first equation with a positive coefficient
  for coef, new_var in negative_terms: # old_var must cancel out with some other variable new_var
                                 # that appears in the first equation with a negative coefficient
    new_inequations = substitute_all(old_var, new_var, inequations)
    for nonzero in new_inequations:
      if nonzero == 0: # The substitution old_var <- new_var leads to a contradiction
        break
    else: # Perform the substitution and make a recursive call
      new_equations = substitute_all(old_var, new_var, equations, remove_trivial = True)
      sub = solve_recurs(new_equations, new_inequations, substitutions+[(old_var, new_var)])
      if sub is not None:
        ans_substitutions.extend(sub)
  return ans_substitutions

def pretty_print_sub(substitutions):
  '''Prints the given list of substitutions as a sequence of equalities.'''
  partition = []
  sub_to_str = lambda t: (str(t[0]).ljust(4), str(t[1]).ljust(4))
  str_substitutions = map(sub_to_str, substitutions)
  for old, new in str_substitutions:
    for partition_class in partition: # We search for the class of old and new in the partition
      if old in partition_class or new in partition_class:
        partition_class.update({old, new})
        break
    else: # Create a new class in the partition with old and new
      partition.append({old, new})
  for partition_class in partition:
    print(" = ".join(partition_class))

if __name__ == "__main__":
  # Define the variables X_i, the Z_i's and the factors appearing in the proof
  X = [None] + list(symbols('X_1:13')) # We do not use X[0] to match the notations of the proof
  Z = [None] + [X[2*i-1] + X[2*i] for i in range(1, 7)]
  factors_LHS = [Z[3]-Z[5], Z[6]-Z[2], Z[4]-Z[1]]
  factors_RHS = [Z[2]-Z[4], Z[5]-Z[1], Z[6]-Z[3]]

  nonzero_expressions = [] # List of conditions of the form 'expr != 0' satisfied by the X_i's
  for i in range(1, 7):
    nonzero_expressions.append(X[2*i-1] - X[2*i])
    nonzero_expressions.extend([Z[i] - Z[j] for j in range(1, i) if i + j != 7])

  cnt_print = 0
  max_num_sol_to_print = 10
  all_solutions = []
  for signs in it.product([+1, -1], repeat = 3): # Choose 3 signs
    if prod(signs) == 1:
      for permuted_factors_RHS in it.permutations(factors_RHS):
        # Each factor on the LHS must equal a factor on the RHS, up to a sign
        equations = [factors_LHS[i] - signs[i]*permuted_factors_RHS[i] for i in range(3)]

        # Printing parameters
        cnt_print += 1
        print("Case", cnt_print, "out of 24.")
        print("-> The signs are", ", ".join([str(sign).rjust(2, "+") for sign in signs]))
        print("-> Permutation", (cnt_print-1)%6+1, "out of 6.")
        print("The system of equations is:")
        for eq in equations:
          print(str(eq)+" = 0")
        print("Computing the solutions...")

        answer = solve_recurs(equations, nonzero_expressions)
        all_solutions.extend(answer)

        # Printing solutions
        print("...there are", len(answer), "solutions.")
        if len(answer) > 0:
          num_sol_to_print = min(len(answer), max_num_sol_to_print)
          print("\nFor example, here are", num_sol_to_print, "solutions:\n")
          for i in range(num_sol_to_print):
            pretty_print_sub(answer[i])
            print()
          print("There are", len(answer)-num_sol_to_print, "more solutions.\n")
        print()
        sleep(1)
\end{minted}

\end{document}